\date{}
\newtheorem{theorem}{Theorem}
\newtheorem{lemma}{Lemma}
\newtheorem{remark}{Remark}
\def\dotgt{{\overset \cdot \geq}}
\def\dotlt{{\overset \cdot \leq}}
\newcommand{\med}{\;|\;}
\newcommand{\dff}{\stackrel{\scriptscriptstyle\triangle}{=}}
\newcommand{\bx}{\boldsymbol{x}}
\newcommand{\by}{\boldsymbol{y}}
\newcommand{\bz}{\boldsymbol{z}}
\newcommand{\bc}{\boldsymbol{c}}
\newcommand{\bbe}{\mathbb{E}}
\newcommand{\snr}{{\sf SNR}}
\newcommand{\pr}{{\rm Pr}}
\newcommand{\pe}{P_{\rho}}
\begin{document}
\title{Beacon-Assisted Spectrum Access with Cooperative Cognitive Transmitter and Receiver }

\author{Ali~Tajer~\and~Xiaodong~Wang
\thanks{The authors are with the Department of Electrical Engineering, Columbia University, New York, NY 10027 (email:\{tajer,  wangx\}@ee.columbia.edu).}}
\maketitle

\begin{abstract}
Spectrum access is an important function of cognitive radios for detecting and utilizing spectrum holes without interfering with the legacy systems. In this paper we propose novel cooperative
communication models and show how deploying such cooperations
between a pair of secondary transmitter and receiver assists them in
identifying spectrum opportunities more reliably. These cooperations are facilitated by {\em dynamically} and {\em opportunistically} assigning one of the secondary users as a relay to assist the other one which results in more efficient spectrum hole detection. Also, we investigate the impact of erroneous detection of spectrum holes and thereof missing communication opportunities on the capacity of the secondary channel. The capacity of the secondary users with \emph{interference-avoiding} spectrum access is affected by 1) how effectively the availability of vacant spectrum is sensed by the secondary transmitter-receiver pair, and 2) how correlated are the perceptions of the secondary transmitter-receiver pair about network spectral activity. We show that both factors are improved by using the proposed cooperative protocols. One of the proposed protocols requires explicit information exchange in the network. Such information exchange in practice is prone to wireless channel errors (i.e., is imperfect) and costs bandwidth loss. We analyze the effects of such imperfect information exchange on the capacity as well as the effect of bandwidth cost on the achievable throughput. The protocols are also extended to multiuser secondary networks.
\end{abstract}

{\bf Index Terms:} Cognitive radio, spectrum access, cooperative
diversity, opportunistic communication, channel capacity.

\section{Introduction}
\label{introduction}

Cognitive radios have been introduced as a potential solution for alleviating the scarcity of frequency spectrum in overly crowded environments~\cite{Mitola:thesis, Haykin:JSAC05, Tarokh:IT06,
Viswanath:arXiv}. According to the Federal Communication Commission
(FCC), a vast portion of the frequency bands is used only
sporadically and furthermore, the usage varies geographically and
temporally. Such inefficient utilization of spectrum, as well as the
increasing demand for frequency bands by the existing and emerging
wireless applications, motivates opportunistic access to licensed
bands by unlicensed (secondary) users. The notion of cognitive
radios makes it possible to accommodate self-configuring ad-hoc
links within currently established wireless communication
infrastructures. For accessing vacant spectrum bands assigned to
licensed users, the secondary users continuously and actively monitor
the spectrum in order to efficiently make use of the spectrum hole
while avoiding interference with licensed users
(interference-avoiding~\cite{Devroye:Springer07} or interweave
paradigm~\cite{Goldsmith:08_Appear}) or having controlled level of
interference temperature
(interference-controlled~\cite{Devroye:Springer07} or underlay
paradigm~\cite{Goldsmith:08_Appear}).

Related studies on spectrum access can be broadly categorized as
those proposing efficient methods for locating the holes in the
spectrum~\cite{Sahai:online07, Slepian:IT58, Price:IT61} and those
discussing how to optimally allocate the available unused frequency
bands to secondary users~\cite{Tse:JSAC07, Zhao:JSAC07,
Honig:crowncom07, Lifeng:ACM}. The works in the former category aim
at improving the delectability of the unused portions of spectrum by
using physical layer techniques, while those in the latter one
develop media access control (MAC) protocols seeking to maximize
secondary users' network throughput. The channel access protocols we
propose in this paper fall within the first category mentioned
above. These protocols allow the secondary transmitter and receiver to
cooperatively listen to the primary user instead of having independent observations.

The gains yielded for the detection of unused spectrum are due to the
diversity gains introduced by the cooperation. The increase in the
channel capacity is also partly due to the same diversity gain but
is mostly shaped by reducing the uncertainty at the secondary transmitter and receiver nodes about each other's observation of the
spectral activity, which is a by-product of the cooperation between
them. As discussed in~\cite{Jafar:JSAC07}, discrepancy in spectral
activity awareness at the secondary transmitter and receiver, which is due to their spatial separation and/or imperfect channel sensing, incurs a loss in channel capacity.

The underlying idea of the cooperation models developed is to have
a one-time broadcast of a \emph{beacon} message and a one-time
relaying of this message by one of the secondary users such that
\emph{both} enjoy a second order diversity gain in detecting the
beacon message. This beacon message is a codeword within the
codebook of the primary user, reserved specifically for the purpose
of informing the secondary users of the vacant channel.

Besides seeking spectrum opportunities and utilizing them, another
major function of secondary users is to detect the return of
primary users and agilely vacate the channel. The idea of deploying
cooperative diversity schemes and their merits in detecting the
return of the primary users has also been investigated as an
independent problem in~\cite{Li:WCOM07_1, Li:WCOM07_2, Li:JSAC08}.


\section{System Descriptions}
\label{sec:model}

We consider a primary transmitter $T_p$ and a pair of secondary transmitter $T_t$ and receiver $T_r$ users. The secondary users
continuously monitor the channel used by the primary user, seeking
an opportunity to take it over when it is unused. We assume that all
channels between the primary user and the secondary users and
those between the secondary users are quasi-static wireless flat fading channels which remain unchanged during the transmission of a block of symbols and change to independent states afterwards.

We denote the channels between the primary user and the secondary transmitter and receiver by $\gamma_{p,t}$ and $\gamma_{p,r}$
respectively. The channel between the secondary nodes are denoted by
$\gamma_{t,r}$ and $\gamma_{r,t}$. The physical channel between
nodes $i\in\{p,t,r\}$ and $j\in\{t,r\}$ has an instantaneous
realization
\begin{equation}
    \label{eq:model1}
    \gamma_{i,j}=\sqrt{\lambda_{i,j}}\cdot h_{i,j},
\end{equation}
where fading coefficients $h_{i,j}$ are assumed to be independent,
circularly symmetric complex Gaussian random variables
$\mathcal{CN}(0,1)$. The term $\lambda_{i,j}$ accounts for path loss
and shadowing and is given by $\lambda_{i,j}=S_{i,j}d_{i,j}^{-\zeta}$, where $S_{i,j}$ represents the log-normal shadowing effect and
$d_{i,j}$ is the physical distance between nodes $i,j$ and $\zeta>0$
is the path loss exponent.

In~\cite{Sahai:online07} it is shown that compared with energy
detection methods, the detection of vacant spectrum can be
significantly improved through transmitting pilot signals by the
primary user and deploying coherent detection at the secondary users.
Motivated by this fact, in this paper we assume that whenever a channel is released by the primary user, it is announced to the secondary users by having the primary user broadcast a beacon message. It is noteworthy that as eventually we are evaluating the network throughput, it is valid to assume that the secondary users are always willing to access the channel and are continuously monitoring the channel for the beacon message by the primary user.

We consider $N$ consecutive channel uses for the transmission of
the beacon message. The beacon message, sent by the primary user, is
denoted by $\bx_b=[x_b[1],\dots,x_b[N]]^T$ and the received signals
by the secondary transmitter and receiver are denoted by
$\by_t=[y_t[1],\dots,y_t[N]]^T$ and $\by_r=[y_r[1],\dots,y_r[N]]^T$,
respectively. For the direct transmission from the primary user to
the secondary users, as the baseline in our comparisons, we have
\begin{eqnarray}
    \label{eq:model4}
    y_t[n]&=&\gamma_{p,t}x_b[n]+z_t[n],\\
    \mbox{and}\;\;\;\label{eq:model5} y_r[n]&=&\gamma_{p,r}x_b[n]+z_r[n],\; n=1\dots,N,
\end{eqnarray}
where $\bz_t$ and $\bz_r$ denote the zero mean additive white
Gaussian noise terms with variance $N_0$. Also we assume that all
transmitted signals have the same average power, i.e.,
$\bbe[|\bx_b|^2]\leq P_p$, and denote $\rho\dff\frac{P_p}{N_0}$ as the
$\snr$ without fading, pathloss and shadowing. Therefore, the
instantaneous $\snr$ is given by
\begin{equation}
    \label{eq:model3}
    \snr_{i,j}=\rho\lambda_{i,j}\cdot|h_{i,j}|^2.
\end{equation}
Throughout the paper we say that two functions $f(x)$ and
$g(x)$ are \emph{exponentially} equal, denoted by
$f(x)\doteq g(x)$ if
\begin{equation*}
    \lim_{x\rightarrow\infty}\frac{\log f(x)}{\log g(x)}=1.
\end{equation*}
The ordering operators $\dotlt$ and $\dotgt$ are defined
accordingly.

\section{Cooperation Protocols}
\label{sec:protocol}
We assume that the secondary users are informed of the vacancy of
the channels by having the primary user broadcast a beacon
message \emph{a priori} known to the secondary users. The beacon
message is a reserved codeword in the codebook of the primary user
and is dedicated to announcing the availability of the channel for
being accessed by the secondary users. We intend to devise a
cooperation model such that \emph{both} secondary transmitter and
receiver decode the beacon message with a second order diversity
gain.

Cooperative diversity has been studied and developed extensively as
a means for making communication over wireless links more reliable
\cite{Sendonaris:COM03_1, Sendonaris:COM03_2, Laneman:IT04,
Todd:WCOM06}. In cooperative communication a \emph{point-to-point}
link is assisted by an intermediate node (relay) aiming at providing
the intended receiver by additional diversity gains while the
overall resources (frequency bandwidth and power) remain unchanged
compared to the non-cooperative schemes.

In this paper we first develop a novel cooperation model appropriate
for \emph{multicast} transmissions and then show how to adopt it for building cooperative spectrum access schemes. Unlike the
conventional three node cooperative models where the objective is to
achieve a second-order diversity gain at \emph{only} the intended
receiver, we consider broadcasting the same message to two receivers
such that \emph{both} enjoy second-order diversity gains and yet use
the same amount of resources. This cooperation scheme is a combination of opportunistic relay assignment and bit forwarding. While this protocol, like most other existing cooperation models, guarantees performance enhancement in only high enough $\snr$ regimes, we further modify it protocol such that it exerts cooperation only if it is beneficial. This modified protocol, compared to the non-cooperative transmission, will ensure performance improvement over all $\snr$ regimes and all channel realizations.

\subsection{Cooperative Spectrum Access (CSA)}
\label{sec:CSA}

Assume that the beacon message consists of $K$ information bits and $N-K$ parity bits giving rise to the coding rate $\frac{K}{N}$ and requires $N$ channel uses in a non-cooperative transmission. We utilize the regenerative scheme of \cite{Todd:WCOM06} and divide the beacon message into to smaller segments of lengths $K\leq N_1<N$ and $N_2\dff N-N_1$ and define the \emph{level of cooperation} as $\alpha\dff\frac{N_1}{N}$. The transmission of the beacon message is accomplished in two steps:
\begin{enumerate}
  \item The primary user uses only the first $N_1$ channel uses (as opposed to non-cooperative that uses all the $N$ channel uses) to broadcast a reserved codebook in its codebook ($\bx_b'$) when it is willing to release the channel. This reserved codeword is a weaker codeword compared to the original beacon. Meanwhile, the secondary transmitter and receiver try to decode the first segment.
  \item During the remaining $(N-N_1)$ channel uses, any secondary user who has successfully decoded the first segment, immediately constructs $N_2$ additional extra bits and forwards them to the other secondary user. If both happen to decode successfully, both will try to transmit in the second phase and their transmissions collide. Such collision is insignificant and incurs no loss as it only happens when both secondary users have already successfully decoded the beacon. If neither of the secondary users is successful in decoding the first segment of the beacon message, there will be no transmission in the second phase and hence $(1-\alpha)$ portion of the time slot is wasted. Despite such possible waste, as analyses reveal, by cooperation a better overall performance can be achieved. As shown in Section~\ref{sec:diversity}, the combination of opportunistic relay assignment and bit forwarding uses essentially the same amount of resources (power and bandwidth) as in the non-cooperative approach, and yet provides a second-order diversity gain for {\em both} secondary users, whereas if one of the nodes is selected \emph{a priori} and independently of the channel conditions, this node will achieve only a first-order diversity gain.
\end{enumerate}
It is also noteworthy that the CSA protocol can be extended beyond the scope of cognitive networks and can be adopted in any application in which a message is intended to be multicast to several nodes.

\subsection{Opportunistic Cooperative Spectrum Access (OCSA)}
\label{sec:OCSA}

Attaining a higher diversity order, which is a high $\snr$ measure,
regime does not necessarily guarantee performance enhancement in
the low or even moderate $\snr$ regimes. For instance, when either the
source-relay or the relay-destination link is very weak, cooperation
may not be beneficial. When exchanging certain information between the primary and the secondary users is possible, we can overcome this problem by modifying the CSA protocol such that cooperation is deployed only when deemed beneficial. Therefore, the protocol becomes opportunistic in the sense it dynamically decides whether to allow cooperation or not. The cost of this improvement is that the primary and the secondary users need to acquire the quality of their outgoing channels. Acquiring such channel gains is not necessarily feasible in all networks, but those capable of it can benefit from deploying the OCSA protocol instead of the CSA protocol. The discussions on how the required information exchange should take place are provided in Section \ref{sec:feedback}. The cooperation is carried out in two steps:
\begin{enumerate}
  \item The first step is similar to that of the CSA protocol.
  \item All the three nodes (the primary user and the
      pair of secondary transmitter-receiver) step in a competition
      for broadcasting additional party bits for $\bx_b'$ during
      the next $N_2$ channel uses. The competition is carried out
      as follows. We first assign the metrics
      $t_p\dff|\gamma_{p,t}|^2+|\gamma_{p,r}|^2$,
      $t_t\dff|\gamma_{p,t}|^2+|\gamma_{t,r}|^2$ and
      $t_r\dff|\gamma_{p,r}|^2+|\gamma_{r,t}|^2$ to the primary
      user, the secondary transmitter and the secondary receiver,
      respectively. Then the $N_2$ channel uses is allocated to
      \begin{enumerate}
        \item the secondary transmitter if it successfully
            decodes $\bx_b'$ \emph{and} $t_t>\max\{t_p,t_r\}$;
        \item the secondary receiver if it successfully decodes
            $\bx_b'$ \emph{and} $t_r>\max\{t_p,t_t\}$;
        \item to the primary transmitter if \emph{either}
            $t_p>\max\{t_t,t_r\}$ \emph{or} both secondary
            users fail to decode $\bx_b'$.
      \end{enumerate}
\end{enumerate}
As shown in the next sections, including the primary transmitter in the
competition has the advantage of guaranteeing that cooperation is deployed only when it is helpful.

A simple technique for identifying the winner in the second phase in a distributed way (without having a central controller) is to equip each primary and secondary user with a backoff timer, with its initial value set inversely proportional to its corresponding metric $t$. Therefore the timer of the user with the largest metric $t$, goes off sooner and will start relaying. The idea of encoding the backoff timer by channel gain has also been used in~\cite{Bletsas:JSAC06, Ali:ISIT07,Chen:SP07}.

In multiuser primary networks, the central authority of the network responsible for of spectrum coordinations should be in charge of transmitting the beacon message. The reason is that individual users are not fully aware of the network spectral activity and if they act independently there exists the possibility that some user mistakenly announces that a spectrum is vacant while some other primary user is utilizing it. Such strategy will also avoid having multiple primary users transmit beacon messages concurrently, in which case these messages collide and are lost.

\section{Diversity Analysis}
\label{sec:diversity}

In this section we characterize the performance of the proposed
protocols in terms of the probability of erroneous detection of the beacon message, which we denote by $\pe(e)$, at the signal-to-noise ratio $\rho$. The achievable diversity gain which measures how rapidly the error probability decays with increasing $\snr$ is given by
\begin{equation*}
    {\rm diversity\;
    order}=-\lim_{\rho\rightarrow\infty}\frac{\log\pe(e)}{\log\rho}.
\end{equation*}

\subsection{Non-Cooperative Scheme}
\label{sec:div:non_coop}

As a baseline for performance comparisons, we first consider the direct
transmission (non-cooperative) by the primary user over the channels
given by~(\ref{eq:model4})~and~(\ref{eq:model5}). We denote the events that the secondary transmitter and receiver do not decode the beacon message successfully by $e_t$ and $e_r$, respectively. For a coded
transmission with coherent detection, the pairwise error probability
(PEP) that the secondary transmitter erroneously detects the beacon codeword $\bc$ in favor of the codeword $\hat{\bc}$ for the channel realization $\gamma_{p,t}$ is given by \cite[(12.13)]{simon:book}
\begin{equation}
    \label{eq:div:nc1}\pe(e_t\med\gamma_{p,t})\dff
    \pr(\bc\rightarrow\hat{\bc}\med\gamma_{p,t})=Q\bigg(\sqrt{2d\rho|\gamma_{p,t}|^2}\bigg),
\end{equation}
where $d$ is the Hamming distance between $\bc$ and $\hat{\bc}$.
Throughout the diversity order analyses, we will frequently use the
following two results.
\begin{remark}\label{remark:1}
For a real value $a>0$ and the random variable $u>0$ with probability density function $f_U(u)$ we have
\begin{align}
  \nonumber \bbe_u[Q(\sqrt{au})]& = \int_0^\infty\int_{\sqrt{au}}^\infty
  \frac{1}{\sqrt{2\pi}}\;e^{-v^2/2}dv\;f_U(u)\;du\\
  \label{eq:remark1} &=\int_0^{\infty}\pr\bigg(u\leq\frac{v^2}{a}\bigg)\frac{1}{\sqrt{2\pi}}e^{-v^2/2}\;dv.
\end{align}
\end{remark}
\begin{lemma}\label{lemma:3}
For integer $M>0$ and real $k_i>0$,
\begin{equation}
    \label{eq:lemma3}\int_0^{\infty}\prod_{i=1}^M\bigg(1-e^{-k_iv^2/\rho}\bigg)\frac{1}{\sqrt{2\pi}}\;e^{-v^2/2}\;dv
    \doteq\rho^{-M}.
\end{equation}
\end{lemma}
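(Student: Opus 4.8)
The plan is to show that the left-hand integral, which I denote by $I(\rho)$, satisfies $\rho^M I(\rho)\to C$ for a finite positive constant $C$ as $\rho\to\infty$; since this forces $\log I(\rho)=\log C-M\log\rho+o(1)$, it immediately yields the exponential equality $I(\rho)\doteq\rho^{-M}$. The heuristic driving everything is that for large $\rho$ each factor behaves like $1-e^{-k_iv^2/\rho}\approx k_iv^2/\rho$, so the product contributes the factor $\rho^{-M}$ while the leftover $v$-integral converges to a (finite) Gaussian moment.

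First I would rewrite $\rho^M I(\rho)=\int_0^\infty\prod_{i=1}^M\bigl[\rho(1-e^{-k_iv^2/\rho})\bigr]\frac{1}{\sqrt{2\pi}}e^{-v^2/2}\,dv$ and examine the integrand. For each fixed $v>0$, setting $t=1/\rho$ and using $\lim_{t\to0}(1-e^{-at})/t=a$ gives the pointwise limit $\rho(1-e^{-k_iv^2/\rho})\to k_iv^2$, so the full integrand converges pointwise to $\bigl(\prod_i k_i\bigr)v^{2M}\frac{1}{\sqrt{2\pi}}e^{-v^2/2}$.

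The key technical step is to justify interchanging the limit and the integral, which I would handle by dominated convergence. The elementary inequality $1-e^{-x}\le x$ for $x\ge0$ gives $\rho(1-e^{-k_iv^2/\rho})\le k_iv^2$, so the integrand is dominated uniformly in $\rho$ by $\bigl(\prod_i k_i\bigr)v^{2M}\frac{1}{\sqrt{2\pi}}e^{-v^2/2}$, which is integrable. Dominated convergence then yields $\lim_{\rho\to\infty}\rho^M I(\rho)=\bigl(\prod_i k_i\bigr)\int_0^\infty v^{2M}\frac{1}{\sqrt{2\pi}}e^{-v^2/2}\,dv=:C$, and this constant is finite and strictly positive because the Gaussian moment $\int_0^\infty v^{2M}e^{-v^2/2}\,dv=\sqrt{\pi/2}\,(2M-1)!!$ is.

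Finally, from $\rho^M I(\rho)\to C\in(0,\infty)$ I conclude $\log I(\rho)/\log\rho^{-M}\to1$, which is exactly $I(\rho)\doteq\rho^{-M}$. The only delicate point is the domination step; it succeeds precisely because the large-$v$ tail, where the approximation $1-e^{-x}\approx x$ fails, is suppressed by the Gaussian weight. If one prefers to avoid dominated convergence, the same conclusion follows from two-sided bounds: $1-e^{-x}\le x$ gives $I(\rho)\le C_1\rho^{-M}$ directly, while restricting the integral to a fixed interval $[0,A]$ and using the concavity bound $1-e^{-x}\ge\frac{1-e^{-B}}{B}\,x$ on $[0,B]$ (valid for $k_iv^2/\rho\le B$, hence for all $v\in[0,A]$ once $\rho$ is large) gives $I(\rho)\ge C_2\rho^{-M}$ with $C_2>0$; together these pin the exponent to $-M$.
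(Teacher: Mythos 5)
Your proof is correct, but it takes a genuinely different route from the paper's. The paper proves the lemma for a common exponent $k$ by expanding $\big(1-e^{-kv^2/\rho}\big)^M$ with the binomial theorem, integrating each term exactly via $\frac{1}{\sqrt{2\pi}}\int_0^\infty e^{-av^2/2}\,dv=\frac{1}{2}a^{-1/2}$, Taylor-expanding $(1+2km/\rho)^{-1/2}$ in powers of $\rho^{-1}$, and invoking an inductively proved combinatorial identity, $A(M,n)\dff\sum_{m=0}^M\binom{M}{m}m^n(-1)^m=0$ for $M>n\geq 0$ with $A(M,M)\neq 0$, to annihilate every coefficient below order $\rho^{-M}$; distinct $k_i$ are then handled by sandwiching the product between two identical-exponent products. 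Your dominated-convergence argument is both more elementary and, in one respect, stronger: it treats distinct $k_i$ directly without any sandwich, dispenses with the combinatorial identity and the series manipulations (which in the paper implicitly require $\rho$ large enough for the binomial series to converge, plus an interchange of summations), and delivers the sharp asymptotic $I(\rho)\sim \frac{(2M-1)!!}{2}\prod_{i=1}^M k_i\cdot\rho^{-M}$ with an explicit positive constant, from which the exponential equality follows immediately since $\log\rho^{-M}\rightarrow-\infty$. What the paper's computation buys in exchange is, in principle, access to the full expansion in powers of $\rho^{-1}$ beyond the leading term (for equal $k_i$), since it shows all lower-order coefficients vanish identically rather than merely bounding them. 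Your fallback two-sided bound (the upper bound from $1-e^{-x}\leq x$, the lower bound from a chord estimate on a truncated range $[0,A]$) is also valid and is arguably cleaner than the paper's sandwich, which as printed displays $\min_i k_i$ on both sides of the inequality chain --- evidently a typo for $\max_i k_i$ on one side.
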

\begin{proof}
See Appendix~\ref{app:lemma3}.
\end{proof}
As a result, from (\ref{eq:div:nc1}), (\ref{eq:remark1}), and (\ref{eq:lemma3}) we can show that for non-cooperative transmission we have
\begin{equation*}
    \pe ^{\rm NC}(e_t)=\bbe_{\gamma_{p,t}}\bigg[Q\bigg(\sqrt{2d\rho |\gamma_{p,t}|^2}\bigg)\bigg]=\rho^{-1},
\end{equation*}
and $\pe ^{\rm NC}(e_r)=\rho^{-1}$.

\subsection{CSA Protocol}
\label{sec:div:CSA}

The transmissions are accomplished in $N=N_1+N_2$ consecutive channel uses, where $N_1$ uses is for the direct transmission, and $N_2$ uses for the cooperation phase. We denote $d_1$ and $d_2$ as the hamming distances of the codewords sent in the first and second phase and we have $d_1+d_2=d$. In the case only one node is successful in the first phase, we denote it by $T_{\cal R}\in\{T_t,T_r\}$ and also define $T_{\bar{\cal R}}\dff\{T_t,T_r\}\backslash T_{\cal R}$. Therefore, we get
\begin{align}
    \label{eq:div:csa8}
    \pe^{\rm C}(e_{\bar{\cal R}}\med \gamma_{p,\bar{\cal R}}, \gamma_{{\cal R},\bar{\cal R}})&=
    Q\bigg(\sqrt{2d_1\rho|\gamma_{p,\bar{\cal R}}|^2+2d_2\rho|\gamma_{{\cal R},\bar{\cal R}}|^2}\bigg),
\end{align}
and for the case that there is no relaying we have
\begin{align}
    \label{eq:div:csa9}
    \pe^{\rm C}(e_t\med \gamma_{p,t})&=
    Q\bigg(\sqrt{2d_1\rho|\gamma_{p,t}|^2}\bigg),\\
    \label{eq:div:csa10}
    \mbox{and}\;\;\;\pe^{\rm C}(e_r\med \gamma_{p,r})&=
    Q\bigg(\sqrt{2d_1\rho|\gamma_{p,r}|^2}\bigg).
\end{align}
\begin{theorem}
\label{th:div_csa}
For sufficiently large $\snr$ values, we have
$\pe^{\rm C}(e_t)\doteq \pe^{\rm C}(e_r)\doteq\frac{1}{\rho^2}$,
$\pe^{\rm C}(e_t)<\pe^{\rm NC}(e_t)$, and $\pe^{\rm C}(e_r)<\pe^{\rm
NC}(e_r)$.
\end{theorem}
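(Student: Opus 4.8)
The plan is to compute the marginal error probability $\pe^{\rm C}(e_t)$ by conditioning on the first-phase outcome at the receiver $T_r$, since it is the success or failure of $T_r$ in decoding the first segment that decides whether $T_r$ relays to $T_t$. Write $S_r$ for the event that $T_r$ decodes the first segment correctly. Because the fading coefficients make $\gamma_{p,t}$, $\gamma_{p,r}$, and $\gamma_{r,t}$ mutually independent, I would split
\[\pe^{\rm C}(e_t)=\pr(\bar{S}_r)\,\bbe\big[Q(\sqrt{2d_1\rho|\gamma_{p,t}|^2})\big]+\pr(S_r)\,\bbe\big[Q(\sqrt{2d_1\rho|\gamma_{p,t}|^2+2d_2\rho|\gamma_{r,t}|^2})\big],\]
using (\ref{eq:div:csa9}) for the no-relay branch and (\ref{eq:div:csa8}) (with $\bar{\cal R}=t$, ${\cal R}=r$) for the relay branch.

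First I would dispose of the easy factors. Since $|\gamma_{p,r}|^2$ is exponential, Remark~\ref{remark:1} with Lemma~\ref{lemma:3} at $M=1$ gives $\pr(\bar{S}_r)=\bbe[Q(\sqrt{2d_1\rho|\gamma_{p,r}|^2})]\doteq\rho^{-1}$, hence $\pr(S_r)\to1$. The same one-dimensional computation over $\gamma_{p,t}$ shows the no-relay expectation is $\doteq\rho^{-1}$, so the first term is $\doteq\rho^{-1}\cdot\rho^{-1}=\rho^{-2}$.

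The crux is the relay term, i.e.\ showing $\bbe[Q(\sqrt{2d_1\rho X+2d_2\rho Y})]\doteq\rho^{-2}$ for independent exponentials $X=|\gamma_{p,t}|^2$ and $Y=|\gamma_{r,t}|^2$. Applying the two-variable form of Remark~\ref{remark:1}, this expectation equals $\int_0^\infty\pr(2d_1\rho X+2d_2\rho Y\le v^2)\frac{1}{\sqrt{2\pi}}e^{-v^2/2}\,dv$, and the obstacle is that the joint tail does not factor. I would resolve this by sandwiching the integration region: since $\{2d_1\rho X\le v^2/2\}\cap\{2d_2\rho Y\le v^2/2\}\subseteq\{2d_1\rho X+2d_2\rho Y\le v^2\}\subseteq\{2d_1\rho X\le v^2\}\cap\{2d_2\rho Y\le v^2\}$ and $X$ is independent of $Y$, the probability is bounded above and below by products of the form $(1-e^{-k_1v^2/\rho})(1-e^{-k_2v^2/\rho})$ for suitable constants $k_1,k_2>0$. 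Both bounds are $\doteq\rho^{-2}$ by Lemma~\ref{lemma:3} with $M=2$, pinning the relay term at $\rho^{-2}$. Adding the two branches gives $\pe^{\rm C}(e_t)\doteq\rho^{-2}$, and the identical argument with the roles of $t$ and $r$ exchanged yields $\pe^{\rm C}(e_r)\doteq\rho^{-2}$.

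Finally, for the strict inequalities I would invoke the earlier baseline result $\pe^{\rm NC}(e_t)\doteq\rho^{-1}$. Because the cooperative scheme attains diversity order two while the baseline attains only order one, $\log\big(\pe^{\rm C}(e_t)/\pe^{\rm NC}(e_t)\big)/\log\rho\to-1$, so the ratio $\pe^{\rm C}(e_t)/\pe^{\rm NC}(e_t)\to0$; hence there is an $\snr$ threshold beyond which $\pe^{\rm C}(e_t)<\pe^{\rm NC}(e_t)$, and symmetrically for $e_r$. I expect the only genuinely delicate step to be the joint-tail estimate in the relay term; everything else reduces to one-dimensional applications of Remark~\ref{remark:1} and Lemma~\ref{lemma:3}, together with the observation that a factor tending to a positive constant (here $\pr(S_r)$) does not change the exponential order.
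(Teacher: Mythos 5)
Your proposal is correct and follows essentially the same route as the paper: both expand $\pe^{\rm C}(e_t)$ over the phase-1 decoding outcomes, convert the averaged $Q$-functions into tail integrals via Remark~\ref{remark:1}, bound the joint tail $P\big(d_1|\gamma_{p,t}|^2+d_2|\gamma_{r,t}|^2\le \frac{v^2}{2\rho}\big)$ by a product of marginal tails so that Lemma~\ref{lemma:3} (with $M=2$ and $M=1$) yields the $\rho^{-2}$ behavior, and deduce the strict inequalities from the gap in diversity orders relative to $\pe^{\rm NC}(e_t)\doteq\rho^{-1}$. Your two-branch conditioning on $S_r$ merely merges the paper's four-case expansion (which additionally retains the phase-1 failure factor at $T_t$, making its relay branch $\rho^{-3}$), and your explicit two-sided sandwich on the joint tail is a minor refinement: the paper's displayed chain is upper-bound only, with the matching $\rho^{-2}$ lower bound left implicit in the both-fail term.
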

\begin{proof} We denote the success and failure of the secondary transmitter in the first phase by $T_t:{\cal S}$ and $T_t:{\cal F}$, respectively.  We also define $T_r:{\cal S}$ and $T_r:{\cal F}$ similarly. By using (\ref{eq:div:csa8})-(\ref{eq:div:csa10}) and by expansion over all different combinations of the statuses of the secondary users, for the secondary transmitter we get
\begin{align}
   \nonumber \pe^{\rm C}&(e_t)\\
    &= \underset{=0}{\underbrace{\pe^{\rm C}(e_t\med T_t:{\cal S},\;T_r:{\cal S})}}\pe^{\rm }(T_t:{\cal S},\;T_r:{\cal S})\\
   \nonumber &+\underset{=0}{\underbrace{\pe^{\rm C}(e_t\med T_t:{\cal S},\;T_r:{\cal F})}}\pe^{\rm}(T_t:{\cal S},\;T_r:{\cal F})\\
   \nonumber &+\pe^{\rm C}(e_t\med T_t:{\cal F},\;T_r:{\cal S})\pe^{\rm}(T_t:{\cal F},\;T_r:{\cal S})\\
   \nonumber &+ \underset{=1}{\underbrace{\pe^{\rm C}(e_t\med T_t:{\cal F},\;T_r:{\cal F})}}\pe^{\rm}(T_t:{\cal F},\;T_r:{\cal F})\\
   \nonumber &= Q\bigg(\sqrt{2d_1\rho|\gamma_{p,t}|^2+2d_2\rho|\gamma_{r,t}|^2}\bigg)\\
   \nonumber &\quad\times Q\bigg(\sqrt{2d_1\rho|\gamma_{p,t}|^2}\bigg)\times \left(1-Q\bigg(\sqrt{2d_1\rho|\gamma_{p,r}|^2}\bigg)\right)\\
   \label{eq:div:csa5} &\quad+ Q\bigg(\sqrt{2d_1\rho|\gamma_{p,t}|^2}\bigg)\times Q\bigg(\sqrt{2d_1\rho|\gamma_{p,r}|^2}\bigg)\\
   \nonumber &= \int_0^{\infty} \underset{\leq P\bigg(d_1|\gamma_{p,t}|^2\leq\frac{v^2}{2\rho}\bigg)P\bigg(d_1|\gamma_{p,r}|^2\leq\frac{v^2}{2\rho}\bigg) }{\underbrace{P\bigg(d_1|\gamma_{p,t}|^2+d_2|\gamma_{r,t}|^2\leq\frac{v^2}{2\rho}\bigg)}}
   \frac{e^{-v^2/2}}{\sqrt{2\pi}}\;dv\\
   \nonumber &\quad\times \int_0^{\infty} P\bigg(d_1|\gamma_{p,t}|^2\leq\frac{v^2}{2\rho}\bigg)
   \frac{e^{-v^2/2}}{\sqrt{2\pi}}\;dv\\
   \nonumber &\quad\times \left[1-\int_0^{\infty} P\bigg(d_1|\gamma_{p,r}|^2\leq\frac{v^2}{2\rho}\bigg)
   \frac{e^{-v^2/2}}{\sqrt{2\pi}}\;dv\right]\\
   \nonumber &\quad+ \int_0^{\infty} P\bigg(d_1|\gamma_{p,t}|^2\leq\frac{v^2}{2\rho}\bigg)
   \frac{e^{-v^2/2}}{\sqrt{2\pi}}\;dv\\
   \label{eq:div:csa11} &\quad\times \int_0^{\infty} P\bigg(d_1|\gamma_{p,r}|^2\leq\frac{v^2}{2\rho}\bigg)
   \frac{e^{-v^2/2}}{\sqrt{2\pi}}\;dv\\
   \nonumber &\leq \underset{\doteq\frac{1}{\rho^2}\;\;\mbox{\footnotesize (from Lemma \ref{lemma:3} and Remark \ref{remark:1})}}{\underbrace{\int_0^{\infty}
   \bigg(1-e^{-\frac{v^2}{2d_1\rho\lambda_{p,t}}}\bigg)
   \bigg(1-e^{-\frac{v^2}{2d_1\rho\lambda_{p,r}}}\bigg)\frac{e^{-v^2/2}}{\sqrt{2\pi}}dv}}\\
   \nonumber &\quad\times \underset{\doteq \frac{1}{\rho}\;\;\mbox{\footnotesize (from Lemma \ref{lemma:3} and Remark \ref{remark:1})}}{\underbrace{\int_0^{\infty}
   \bigg(1-e^{-\frac{v^2}{2d_1\rho\lambda_{p,t}}}\bigg)\frac{e^{-v^2/2}}{\sqrt{2\pi}}\;dv}}\\
   \nonumber &\quad + \underset{\doteq\frac{1}{\rho}\;\;\mbox{\footnotesize (from Lemma \ref{lemma:3} and Remark \ref{remark:1})}}{\underbrace{\int_0^{\infty}
   \bigg(1-e^{-\frac{v^2}{2d_1\rho\lambda_{p,t}}}\bigg)\frac{e^{-v^2/2}}{\sqrt{2\pi}}\;dv}}\\
   \nonumber &\quad \times \underset{\doteq\frac{1}{\rho}\;\;\mbox{\footnotesize (from Lemma \ref{lemma:3} and Remark \ref{remark:1})}}{\underbrace{\int_0^{\infty}
   \bigg(1-e^{-\frac{v^2}{2d_1\rho\lambda_{p,r}}}\bigg)\frac{e^{-v^2/2}}{\sqrt{2\pi}}\;dv}}\doteq \frac{1}{\rho^2}+\frac{1}{\rho^3}\doteq\frac{1}{\rho^2},
\end{align}
where the transition from (\ref{eq:div:csa5}) to (\ref{eq:div:csa11}) follows from Remark 1. Now, by comparing to the non-cooperative case, where we have $\pe^{\rm  NC}(e_t)\doteq\frac{1}{\rho}$, we conclude that for sufficiently large $\snr$, $\pe^{\rm C}(e_t)<\pe^{\rm NC}(e_t)$. The same line of argument holds for the secondary receiver.
\end{proof}

The CSA protocol ensures that for sufficiently high $\snr$ values, the
probability of detecting unused spectrum holes is boosted.
However, the performance in the low $\snr$ regimes, depends on the
instantaneous channel conditions and it might happen that
non-cooperative spectrum access outperforms the cooperative scheme.

\subsection{OCSA Protocol}
\label{sec:div:ocsa}
Like in the CSA protocol, the first $N_1$ channel uses are allocated
for direct transmission from the primary user to the secondary users. During the remaining time, depending on the instantaneous
channel conditions, the primary user might transmit the additional parity bits itself or one of the secondary users might take over such transmission. We denote the transmitting user in the second phase by $T_{\cal R}\in\{T_p,T_t,T_r\}$. Therefore, the transmission in the second phase is given by
\begin{align*}
  y_t[n]&=
  \left\{\begin{array}{cc}
           \gamma_{{\cal R},t}x_{\cal R}[n]+z_t[n] & \mbox{if}\quad T_{\cal R}\neq T_t  \\
           0 &  \mbox{if}\quad T_{\cal R}= T_t
         \end{array}\right.,\\
  \mbox{and}\quad y_r[n]&=
  \left\{\begin{array}{cc}
           \gamma_{{\cal R},r}x_{\cal R}[n]+z_r[n] & \mbox{if}\quad T_{\cal R}\neq T_r  \\
           0 &  \mbox{if}\quad T_{\cal R}= T_r
         \end{array}\right..
\end{align*}
As specified by the protocol, if neither of the secondary users
decodes the beacon message successfully, the primary user will transmit the additional parity bits itself. In this case, transmission of the beacon message is similar to (\ref{eq:model4})-(\ref{eq:model5}) for the
entire channel uses.
\begin{theorem}
\label{th:div_ocsa}
For all values of $\snr$, channel realizations and level of
cooperation, we have $\pe^{\rm OC}(e_t)<\pe^{\rm NC}(e_t)$,
$\pe^{\rm OC}(e_r)<\pe^{\rm NC}(e_r)$, and $\pe^{\rm OC}(e_t)\doteq
\pe^{\rm ONC}(e_r)\doteq\frac{1}{\rho^2}$.
\end{theorem}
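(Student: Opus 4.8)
The plan is to establish the three claims of Theorem~\ref{th:div_ocsa} by comparing the OCSA protocol to both the CSA protocol and the non-cooperative baseline, exploiting the structure of the opportunistic relay selection. I would first observe that OCSA differs from CSA only in the second phase: rather than fixing the relay as the successful secondary user, OCSA assigns the $N_2$ channel uses to whichever of $T_p,T_t,T_r$ has the largest metric $t$ (subject to successful decoding of $\bx_b'$ for the secondary users). The key structural fact I would emphasize is that the primary user is included in the competition, and its metric $t_p=|\gamma_{p,t}|^2+|\gamma_{p,r}|^2$ is precisely the sum of the two primary-to-secondary channel gains. This guarantees that cooperation via a secondary relay is triggered \emph{only} when doing so yields a channel at least as good as the direct primary link for the relayed user --- which is what will give the ``for all $\snr$ and all channel realizations'' strength of the inequalities, as opposed to the merely asymptotic improvement in Theorem~\ref{th:div_csa}.

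The first step is to prove the two strict inequalities $\pe^{\rm OC}(e_t)<\pe^{\rm NC}(e_t)$ and $\pe^{\rm OC}(e_r)<\pe^{\rm NC}(e_r)$ on a per-realization basis. For the secondary transmitter, I would condition on the identity of the winning relay $T_{\cal R}$. When $T_{\cal R}=T_p$, the effective received signal at $T_t$ over all $N$ channel uses matches the non-cooperative model~(\ref{eq:model4}), so the conditional error probability equals the non-cooperative one; when $T_{\cal R}=T_r$, the transmitter benefits from the extra term $2d_2\rho|\gamma_{r,t}|^2$ inside the $Q$-function exactly as in~(\ref{eq:div:csa8}), strictly lowering the argument's complement and hence the error probability, since $Q$ is strictly decreasing. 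The selection rule ensures the relay is chosen to favor the larger combined metric, so the mixture over cases never exceeds the non-cooperative value and is strictly smaller on a positive-probability event. Assembling these cases via the law of total probability yields the strict inequality for every $\rho$ and every realization; the same argument transposed gives the claim for $T_r$.

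The second step is the diversity-order claim $\pe^{\rm OC}(e_t)\doteq\pe^{\rm OC}(e_r)\doteq\frac{1}{\rho^2}$ (I read the ``ONC'' in the statement as a typo for ``OC''). For this asymptotic part I would argue that OCSA is sandwiched: it is at least as good as CSA realization-by-realization (since OCSA optimizes the relay choice rather than fixing it), so $\pe^{\rm OC}(e_t)\dotlt\pe^{\rm C}(e_t)\doteq\frac{1}{\rho^2}$ by Theorem~\ref{th:div_csa}. For the matching lower bound I would show the diversity order cannot exceed two: conditioned on the event that both secondary users fail the first phase (which has probability $\doteq\rho^{-2}$ by the non-cooperative single-link analysis applied to each of the two independent primary links together with Lemma~\ref{lemma:3}), OCSA reverts to the primary transmitting alone and the error event is dominated by this term. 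Combining the upper and lower exponential bounds pins the diversity order at two.

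The main obstacle will be the lower bound in the diversity-order argument, i.e.\ verifying that no relay-selection case secretly achieves a diversity order higher than two and that the dominant contribution is correctly identified. The per-realization inequalities are essentially monotonicity of $Q$ plus careful bookkeeping over the selection cases, but the asymptotic two-sided bound requires confirming that the metric-based selection does not alter the exponent relative to CSA and that the ``both fail'' event genuinely caps the achievable diversity --- this is where the inclusion of $t_p$ in the competition must be used quantitatively, invoking Remark~\ref{remark:1} and Lemma~\ref{lemma:3} to evaluate the relevant integrals exactly as in the proof of Theorem~\ref{th:div_csa}.
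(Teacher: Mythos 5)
Your proposal is correct, and on the two strict inequalities it is essentially the paper's own argument: you condition on the identity of the second-phase transmitter exactly as in (\ref{eq:improvement1}), note that $\pe^{\rm OC}(e_t\med T_{\cal R}=T_t)=0$, that $T_{\cal R}=T_p$ reproduces the non-cooperative error $Q\big(\sqrt{2d\rho|\gamma_{p,t}|^2}\big)$, and that $T_{\cal R}=T_r$ forces $t_r>t_p$, hence $|\gamma_{r,t}|>|\gamma_{p,t}|$, so the cooperative $Q$-argument dominates the non-cooperative one --- this is precisely the quantitative use of including $t_p$ in the competition that drives (\ref{eq:improvement2}), and you identified it (also correctly reading ``ONC'' as a typo for ``OC''). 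Where you genuinely diverge is the diversity part. The paper argues directly: it writes $\pe^{\rm OC}(e_t)=\pe^{\rm OC}(e_t\med T_t:{\cal F})\,P(T_t:{\cal F})$, shows $\pe^{\rm OC}(e_t\med T_t:{\cal F})\dotlt\rho^{-1}$ by splitting on $T_r$'s status, and multiplies by $P(T_t:{\cal F})\doteq\rho^{-1}$ (the concluding ``$\doteq\frac{1}{\rho}$'' in the paper is itself a typo for $\rho^{-2}$). You instead sandwich OCSA between CSA and a converse event. The upper half (OCSA dominates CSA realization-by-realization, then invoke Theorem~\ref{th:div_csa}) is valid but needs the short case check you only gesture at: the dominance is not generic ``relay optimization'' but the specific algebra that $t_r$ and $t_p$ share the common term $|\gamma_{p,r}|^2$, so the metric comparison is exactly a comparison of $|\gamma_{r,t}|^2$ with $|\gamma_{p,t}|^2$, the two gains relevant to the failed user; and in the both-fail case OCSA's primary retransmission beats CSA's wasted second phase. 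Your converse --- pinning the diversity at exactly two via the both-fail event of probability $\doteq\rho^{-2}$ --- is something the paper silently omits (its $\doteq$ conclusion rests on a one-sided bound), so your plan is in this respect more complete; the delicate step you flag is real, namely showing $\pe^{\rm OC}(e_t\med T_t:{\cal F},T_r:{\cal F})$ is bounded away from zero, which holds because conditioning on a first-phase pairwise error concentrates $\rho|\gamma_{p,t}|^2$ on $O(1)$ values, so full-distance decoding still fails with constant probability.

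One caution on wording: the strict inequality does not hold for every \emph{fixed} channel realization. When $t_p=\max\{t_p,t_t,t_r\}$ the primary always transmits the parity itself and the OCSA error at $T_t$ coincides exactly with the non-cooperative value for that realization. Strictness is obtained only after averaging over the fading, since the event $t_p<\max\{t_t,t_r\}$ has positive probability --- your ``strictly smaller on a positive-probability event'' is the right mechanism, but your conclusion ``strict inequality for every $\rho$ and every realization'' overstates it (a looseness the theorem statement itself shares); state the strict inequality for the fading-averaged $\pe^{\rm OC}$, as the paper's chain of expectations in (\ref{eq:improvement2}) in fact does.
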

\begin{proof}
The probability of missing the beacon message by the secondary transmitter is
\begin{equation}
    \label{eq:improvement1} \pe^{\rm OC}(e_t)=\hspace{-.1in}\sum_{i\in\{p,t,r\}}\pe^{\rm OC}(e_t\med T_{\cal R}=T_i)\pe^{\rm OC}(T_{\cal R}=T_i).
\end{equation}
Note that $\pe^{\rm C}(e_t\med T_{\cal R}=T_t)=0$. Also when $T_{\cal R}=T_r$, according to the protocol we should have $t_r>t_p$, which provides that $|\gamma_{r,t}|>|\gamma_{p,t}|$. Therefore, from (\ref{eq:improvement1}) we get
\begin{align}
    \nonumber&\pe^{\rm OC}(e_t)\\
    \nonumber &=\bbe_{\gamma_{p,t}}\hspace{-.04in}\left[Q\left(\sqrt{2d_1\rho |\gamma_{p,t}|^2+2d_2\rho |\gamma_{p,t}|^2}\right)\right] \hspace{-.03in}\pe^{\rm OC}(T_{\cal R}=T_p) \\
    \nonumber &+ \bbe_{\gamma_{p,t},\gamma_{r,t}}\left[Q\left(\sqrt{2d_1\rho |\gamma_{p,t}|^2+2d_2\rho |\gamma_{r,t}|^2}\right)\right]\\
    \nonumber&\qquad\times\pe^{\rm OC}(T_{\cal R}=T_r)\\
    \nonumber& <\bbe_{\gamma_{p,t}}\left[Q\left(\sqrt{2d\rho |\gamma_{p,t}|^2}\right)\right]\\
    \label{eq:improvement2} &\quad\times[\pe^{\rm OC}(T_{\cal R}=T_p)+\pe^{\rm OC}(T_{\cal R}=T_r)]=\pe^{\rm NC}(e_t).
\end{align}
$d_1$ and $d_2$ are the hamming distances of the codewords sent in the
first and second phase and we have $d_1+d_2=d$. By following the same line of argument we can accordingly show that $\pe^{\rm OC}(e_r)<\pe^{\rm NC}(e_r)$. In order to assess the diversity gain note that
\begin{align}
    \nonumber \pe^{\rm OC}(e_t) &=\pe^{\rm OC}(e_t\med T_t:{\cal F})P(T_t:{\cal F})\\
    \label{eq:improvement3} & \qquad +\underset{=0}{\underbrace{\pe^{\rm OC}(e_t\med T_t:{\cal S})}}P(T_t:{\cal S}).
\end{align}
On the other hand
\begin{align}
    \nonumber \pe^{\rm OC}&(e_t\med T_t:{\cal F})\\
    \nonumber &= \pe^{\rm OC}(e_t\med T_t:{\cal F},T_r:{\cal F})P(T_r:{\cal F})\\
    \nonumber & \qquad +\pe^{\rm OC}(e_t\med T_t:{\cal F},T_r:{\cal S})P(T_r:{\cal S})\\
    \nonumber &\leq P(T_r:{\cal F})+\pe^{\rm OC}(e_t\med T_t:{\cal F},T_r:{\cal S})\\
    \nonumber &\leq \bbe_{\gamma_{p,t}}\bigg[Q\bigg(\sqrt{2d_1\rho |\gamma_{p,r}|^2}\bigg)\bigg]\\
     \nonumber & \qquad + \bbe_{\gamma_{p,t}}\bigg[Q\bigg(\sqrt{2d\rho |\gamma_{p,r}|^2}\bigg)\bigg]\\
     \label{eq:improvement4}&\doteq\frac{1}{\rho}+\frac{1}{\rho}\doteq\frac{1}{\rho}.
\end{align}
Recalling that $P(T_t:{\cal F})\doteq\frac{1}{\rho}$ and using (\ref{eq:improvement3})-(\ref{eq:improvement4}) provides that $\pe^{\rm OC}(e_t)\doteq\frac{1}{\rho}$. A similar argument holds for $\pe^{\rm
OC}(e_r)$.
\end{proof}
\label{sec:evaluations}
\begin{figure}[t]
  \centering
  \includegraphics[width=3.7 in]{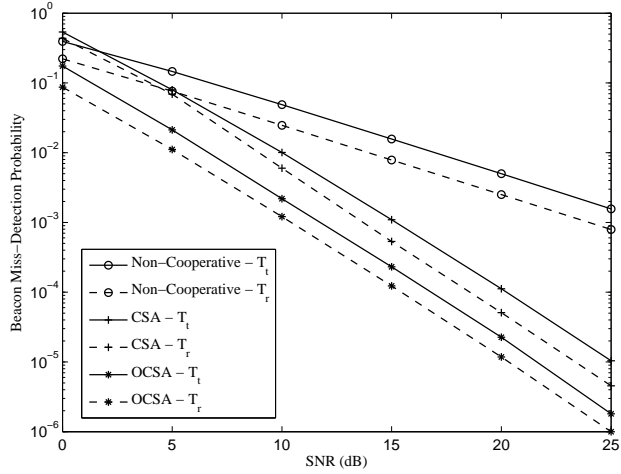}\\
  \caption{Detection error probabilities and diversity gains of non-cooperative, CSA and OCSA schemes.}
  \label{fig:diversity}
\end{figure}
Numerical evaluations of the probabilities of erroneous spectrum
hole detection in different schemes are provided in Fig. \ref{fig:diversity}. These error probabilities for the non-cooperative, the CSA, and the OCSA protocols are provided in (\ref{eq:div:nc1}), (\ref{eq:div:csa11}), and (\ref{eq:improvement2}), respectively. This figure compares the detection performance of the proposed cooperative schemes with that of the non-cooperative scheme, where it demonstrates that the CSA and the OCSA protocols achieve a second-order diversity gain. As expected theoretically, at low $\snr$ regimes the CSA protocol might not outperform the non-cooperative scheme, whereas the OCSA
protocol outperforms the non-cooperative scheme over all $\snr$
regimes. In the evaluations above, we have set $\alpha=0.5$
and have considered channel channel realizations with parameters
$(\lambda_{p,t},\lambda_{p,r},\lambda_{t,r})=(1,2,3)$. In Fig. \ref{fig:diversity}, the error probability of the user $T_i$ for $i\in\{t,r\}$ in the CSA and OCSA protocols is identified by CSA-$T_i$ and OCSA-$T_i$, respectively.

\subsection{False Alarm} \label{sec:false}

Thus far we have examined the improvements attained in detecting the
vacant spectrum holes. In such detection problems, however, another
aspect that should also be taken into consideration is the false alarm probability. In our proposed spectrum access models, false alarm is the event that the secondary users erroneously consider a spectrum band to be idle, i.e., they detect the beacon message while the channel is still used by the primary user. This leads to concurrent transmissions by the primary and secondary users which can potentially harm the transmission
of the primary user. Due to stringent constraints on avoiding concurrent transmissions, it is imperative to study the probability of false alarm as well.

For most detection problems, there exists a tension between the
probability of successful detection and the probability of false
alarm and it is crucial to maintain a good balance between these two
probabilities such that neither of them is sacrificed in favor of
the other one.

In this paper we have translated a detection problem into a
communication problem rested on the basis of exchanging a beacon message. Therefore, assessing the probability of erroneously decoding a non-beacon message in favor of the beacon message has the same nature as that of missing a transmitted beacon and both are quantified in terms of decoding errors. More specifically, the probability that for a channel realization $\gamma_{p,t}$, a non-beacon codeword $\hat \bc$ is mistakenly
decoded as the beacon message denoted by ${\bc}$ (which is the false alarm probability) is given by
\begin{equation}
    \label{eq:false}\pe'(e_t\med\gamma_{p,t})\dff
    \pr(\hat\bc\rightarrow{\bc}\med\gamma_{p,t})=Q\bigg(\sqrt{2d\rho|\gamma_{p,t}|^2}\bigg),
\end{equation}
which is equal to the probability in (\ref{eq:div:nc1}). Therefore,
by following the same lines as in
Sections \ref{sec:div:CSA} and \ref{sec:div:ocsa} it is seen that
the proposed cooperation models will also improve the probability of
false alarm which diminishes as fast as the probability of missing the beacon, enjoying a second order diversity gains.

As a result, deploying the proposed cooperation schemes not only incurs no loss in the probability of false alarm, but also improves it. Hence, by translating the detection problem into a communication problem, the tension between the two aforementioned probabilities is resolved and they can be improved simultaneously.

\section{Capacity Analysis}
\label{sec:capacity}

In this section we analyze the effect of erroneous detection of
unused channel and thereof missing communication opportunities on the
capacity of the secondary users and will show how exploiting the
proposed cooperative diversity protocols enhances the capacity. In the OCSA protocol, each user requires to know the gains of its outgoing channels to the other users. Acquiring such channel gains requires information exchange over the wireless channel. Therefore, the acquired  channel gain estimates are not perfect in practice, which can affect the capacity. In this section, we assume that all channel estimates for the OCSA protocol are perfect and defer analyzing the effect of estimation inaccuracy on the channel capacity to Section \ref{sec:estimation}.

For fading channels there is always a non-zero probability that a
target rate $R$ cannot be supported and there is no meaningful
notion of capacity as a rate of arbitrarily reliable communication.
Therefore, we resort to looking into the notions of outage capacity
for \emph{slow} fading channels and ergodic capacity for \emph{fast}
fading channels.

Capacity of the secondary link is influenced by the spectral activity
of the primary users and the efficiency of the secondary users in
detecting the unused channels. In an earlier study
in~\cite{Jafar:JSAC07}, it is also demonstrated how the secondary user capacity is affected by dissimilar perception of secondary users of primary user's spectral activity in their vicinities,
where it has been shown that more correlated perceptions lead to
higher channel capacity for the secondary link. We will show that the
cooperative protocols are also effective in increasing such
correlation. Lower and upper bounds on the capacity of the secondary
channel, when the received power at the secondary receiver is $P$, are
given by \cite{Jafar:JSAC07}
\begin{align}
  \label{eq:cap4} C^{\rm U}(P)&=\pr(S_t=S_r=1)\log\bigg(1+\frac{P}{\pr(S_t=S_r=1)}\bigg),
\end{align}
and
\begin{align}
  \label{eq:cap5}C^{\rm L}(P)&=\pr(S_t=S_r=1)\log\bigg(1+\frac{P}{\pr(S_t=1)}\bigg)-\frac{1}{T_c}.
\end{align}
where the states $S_t, S_r \in\{0,1\}$, assigned to the secondary
transmitter and receiver respectively, indicate the perception of
the secondary users about the activity of the primary user. $S_t=0$ and
$S_t=1$ mean that the secondary transmitter has sensed the channel
to be busy and idle, respectively, and $S_r$ is defined accordingly.
These state variables retain their states for a period of $T_c$
channel uses and vary to an i.i.d. state afterwards and as seen
above, $T_c$ only affects the lower bound

For further analysis we assume that all channel channels
(i.e., $h_{p,t}, h_{p,r}, h_{t,r}$) follow the same fading
model and therefore the states $S_t$ and $S_r$ have the same time
variations as the secondary channel $h_{t,r}$, which means that all
remain unchanged for $T_c$ channel uses and change to independent
states afterwards. Now, corresponding to different values of $T_c$ we
will have slow and fast fading processes and need to look into
meaningful notions of capacity for each of them.

{\it 1) Fast fading:} Small values of $T_c$ correspond to fast
fading for which a meaningful notion of capacity is given by ergodic
capacity and is obtained by averaging over all channel
fluctuations.
\begin{equation}\label{eq:cap:erg}
    C_{\rm erg}^{\rm U}(\rho)\dff\bbe_{\gamma_{t,r}}\Big[C^{\rm U}(\rho
   \gamma_{t,r})\Big],\; C_{\rm erg}^{\rm L}(\rho)\dff\bbe_{\gamma_{t,r}}\Big[C^{\rm L}(\rho
    \gamma_{t,r})\Big].
\end{equation}

{\it 2) Slow fading:} Corresponding to large values of $T_c\gg 1$ we
consider the $\epsilon$-outage capacity $C_{\epsilon}$ as the
performance measure for which the bounds are given by
\begin{equation}\label{eq:cap:outage}
   \pr\bigg(C^{\rm U(L)}(\rho \gamma_{t,r})<C_{\epsilon}^{\rm U(L)}(\rho)\bigg)\leq
   \epsilon.
\end{equation}
As the ergodic and outage capacities provided
in~(\ref{eq:cap:erg})-(\ref{eq:cap:outage}) depend on the
probability terms $\pr(S_t=1)$ and $\pr(S_t=S_r=1)$, we need to
assess these terms and their variations under different cooperation
models. The states $S_t$ and $S_r$ depend not only the spectral
activity in the vicinity of the secondary transmitter and receiver,
but also on how successfully these users sense this spectral
activity.

We introduce the random variables $\theta_t, \theta_r \in\{0,1\}$ to
account for modeling the spectral activities. $\theta_t=1$
($\theta_r=1$) states that no primary user is using the channel in the vicinity of the secondary transmitter (receiver) and the
channel can be used by the secondary users. $\theta_t=0$ and
$\theta_r=0$ are defined accordingly for busy channels. Based on
the definitions above and those of $S_t$ and $S_r$, which indicate
the perception of secondary users of the activity of the primary
user, we have
\begin{align}
  \label{eq:cap1} &\pr(S_t=1)=\pr(\theta_t=1)(1-\pe(e_t)), \\
  \label{eq:cap6} &\pr(S_t=S_r=1)=\pr(\theta_t=\theta_r=1)\pe(\bar{e}_t,\bar{e}_r).
\end{align}
where $e_t$ and $e_r$ are the erroneous detection events. Note that
the terms $\pr(\theta_t=1)$ and $\pr(\theta_t=\theta_r=1)$ for the
cooperative and non-cooperative schemes are identical and the effect
of cooperation reflects in the terms $\pe(\bar{e}_t)$ and
$\pe(\bar{e}_t,\bar{e}_r)$. In the following two lemmas we show how the CSA and the OCSA protocols improve the probability $\pr(S_t=S_r=1)$.

\begin{lemma}
\label{lemma:1} For sufficiently large $\snr$ we have $\pr^{\rm
C}(S_t=S_r=1)\geq\pr^{\rm NC}(S_t=S_r=1)$.
\end{lemma}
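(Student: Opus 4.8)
The plan is to reduce the claim to a comparison of decoding-error probabilities and then invoke the diversity orders already established in Theorem~\ref{th:div_csa}. By~(\ref{eq:cap6}), $\pr(S_t=S_r=1)=\pr(\theta_t=\theta_r=1)\,\pe(\bar e_t,\bar e_r)$, and since the spectral-activity term $\pr(\theta_t=\theta_r=1)$ is identical for the cooperative and non-cooperative schemes, it suffices to show $\pe^{\rm C}(\bar e_t,\bar e_r)\geq\pe^{\rm NC}(\bar e_t,\bar e_r)$. Passing to complements, and writing the joint success event as $\bar e_t\cap\bar e_r$, this is equivalent to establishing $\pr^{\rm C}(e_t\cup e_r)\leq\pr^{\rm NC}(e_t\cup e_r)$ for sufficiently large $\snr$.

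First I would treat the non-cooperative side. Here $e_t$ is a function only of $\gamma_{p,t}$ and the noise $z_t$, while $e_r$ is a function only of $\gamma_{p,r}$ and $z_r$; since the fading coefficients and the noise processes are mutually independent, $e_t$ and $e_r$ are independent events. Inclusion--exclusion then gives $\pr^{\rm NC}(e_t\cup e_r)=\pr^{\rm NC}(e_t)+\pr^{\rm NC}(e_r)-\pr^{\rm NC}(e_t)\pr^{\rm NC}(e_r)$, and inserting the non-cooperative values $\pr^{\rm NC}(e_t)\doteq\pr^{\rm NC}(e_r)\doteq\frac{1}{\rho}$ from Section~\ref{sec:div:non_coop} yields $\pr^{\rm NC}(e_t\cup e_r)\doteq\frac{1}{\rho}$, the product term being of lower order.

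Next I would bound the cooperative side by a union bound, $\pr^{\rm C}(e_t\cup e_r)\leq\pr^{\rm C}(e_t)+\pr^{\rm C}(e_r)$, and substitute the CSA values $\pr^{\rm C}(e_t)\doteq\pr^{\rm C}(e_r)\doteq\frac{1}{\rho^2}$ from Theorem~\ref{th:div_csa} to get $\pr^{\rm C}(e_t\cup e_r)\dotlt\frac{1}{\rho^2}$. Thus the cooperative joint error decays at least as fast as $\rho^{-2}$, whereas the non-cooperative one decays only as $\rho^{-1}$; an order-two decay eventually falls below an order-one decay, so there is a threshold beyond which $\pr^{\rm C}(e_t\cup e_r)<\pr^{\rm NC}(e_t\cup e_r)$. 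Re-multiplying by the common factor $\pr(\theta_t=\theta_r=1)$ then delivers the claim.

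The point that needs care --- more a subtlety than a genuine obstacle --- is that the two quantities one is tempted to compare directly, $\pe^{\rm C}(\bar e_t,\bar e_r)$ and $\pe^{\rm NC}(\bar e_t,\bar e_r)$, both tend to one, so the exponential-equality relation $\doteq$ is uninformative when applied to them. Working instead with their complements $\pr(e_t\cup e_r)$, which vanish as $\rho\to\infty$, is what makes the diversity-order comparison meaningful: the faster polynomial decay wins regardless of the leading constants, which is precisely what the qualifier ``sufficiently large $\snr$'' in the statement is intended to absorb.
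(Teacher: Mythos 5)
Your proof is correct, but it takes a genuinely different route from the paper's. You pass to the complement event $e_t\cup e_r$, use the independence of $e_t$ and $e_r$ in the non-cooperative scheme to get $\pr^{\rm NC}(e_t\cup e_r)\doteq\frac{1}{\rho}$, and apply a union bound with the diversity orders of Theorem~\ref{th:div_csa} to get $\pr^{\rm C}(e_t\cup e_r)\dotlt\frac{1}{\rho^2}$; since the exponential-order relation absorbs constants and subpolynomial factors, the order-two decay indeed falls strictly below the order-one decay past some $\snr$ threshold, which settles the claim. The paper instead factors the cooperative joint success probability as $\pe^{\rm C}(\bar e_t,\bar e_r)=\pe^{\rm C}(\bar e_r\med\bar e_t)\,\pe^{\rm C}(\bar e_t)$ and proves the key intermediate inequality $\pe^{\rm C}(\bar e_r\med\bar e_t)\geq\pe^{\rm NC}(\bar e_r)$ by conditioning on whether $T_r$ succeeds or fails in the first phase (equation~(\ref{eq:lemma:csa})), then invokes $\pe^{\rm C}(\bar e_t)\geq\pe^{\rm NC}(\bar e_t)$ from Theorem~\ref{th:div_csa}. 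Your argument is more elementary --- it needs only Theorem~\ref{th:div_csa}, independence under NC, and a union bound --- and it handles cleanly the subtlety you flag, namely that $\doteq$ is uninformative when applied to probabilities tending to one (the paper's chain, with its ``$\doteq 1-\frac{1}{\rho^2}$'' shorthand, glosses over exactly this point). What the paper's route buys in exchange is the stronger conditional inequality itself: it is reused in Appendix~\ref{app:theorem3} to derive the ratio inequality~(\ref{eq:th3:proof1}), $\frac{\pr^{\rm C}(S_t=S_r=1)}{\pr^{\rm NC}(S_t=S_r=1)}>\frac{\pr^{\rm C}(S_t=1)}{\pr^{\rm NC}(S_t=1)}$, which is needed for the capacity \emph{lower}-bound comparison in Theorem~\ref{th:3} and which your union-bound argument does not directly deliver.
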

\begin{proof}
We first show that for sufficiently large $\snr$, $\pe^{\rm C}(\bar{e}_r\med\bar{e}_t)\geq\pe^{\rm NC}(\bar{e}_r)$ as follows.
\begin{align}\label{eq:lemma:csa}
    \nonumber \pe^{\rm C}(\bar{e}_r\med\bar{e}_t)&=\underset{=1}{\underbrace{\pe^{\rm C}(\bar{e}_r\med\bar{e}_t,T_r:{\cal S})}}P(T_r:{\cal S})\\
    \nonumber &\qquad +\pe^{\rm C}(\bar{e}_r\med\underset{\equiv T_t:{\cal S},T_r:{\cal F}}{\underbrace{\bar{e}_t,T_r:{\cal F}}})P(T_r:{\cal F})\\
    \nonumber & =  P(T_r:{\cal S})+\pe^{\rm C}(\bar{e}_r\med T_t:{\cal S},T_r:{\cal F})P(T_r:{\cal F})\\
    \nonumber & \geq  \pe^{\rm NC}(\bar{e}_r)P(T_r:{\cal S})\\
    \nonumber &\qquad +\underset{\doteq 1-\frac{1}{\rho^2}}{\underbrace{\pe^{\rm C}(\bar{e}_r\med T_t:{\cal S},T_r:{\cal F})}}P(T_r:{\cal F})\\
    \nonumber&\geq \pe^{\rm NC}(\bar{e}_r)P(T_r:{\cal S})+\underset{\doteq 1-\frac{1}{\rho}}{\underbrace{\pe^{\rm NC}(\bar{e}_r)}}P(T_r:{\cal F})\\
    & =  \pe^{\rm NC}(\bar{e}_r).
\end{align}
On the other hand, from Theorem \ref{th:div_csa} we know that for sufficiently large $\snr$, $\pe^{\rm C}(\bar{e}_t)>\pe^{\rm NC}(\bar{e}_t)$, which in conjunction with (\ref{eq:lemma:csa}) provides that $\pe^{\rm OC}(\bar e_t,\bar e_r)=\pe^{\rm C}(\bar{e}_r\med\bar{e}_t)\pe^{\rm C}(\bar{e}_t)\geq\pe^{\rm NC}(\bar{e}_r)\pe^{\rm NC}(\bar{e}_t)$, which is the desired result.
\end{proof}
\begin{lemma}
\label{lemma:2} For all values of $\snr$ we have
\begin{align*}
\pr^{\rm OC}&(S_t=S_r=1)\geq\\
&\max\bigg\{\pr^{\rm
C}(S_t=S_r=1),\pr^{\rm NC}(S_t=S_r=1)\bigg\}.
\end{align*}
\end{lemma}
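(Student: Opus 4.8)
The plan is to reduce the claim to a comparison of joint beacon-detection success probabilities and then settle it by a \emph{pointwise} (per channel realization) argument, which is what will make the bound hold for \emph{all} $\snr$ rather than only asymptotically as in Lemma~\ref{lemma:1}. By (\ref{eq:cap6}) the factor $\pr(\theta_t=\theta_r=1)$ is common to the three schemes, so it suffices to prove
\[
\pe^{\rm OC}(\bar e_t,\bar e_r)\ \ge\ \max\big\{\pe^{\rm C}(\bar e_t,\bar e_r),\ \pe^{\rm NC}(\bar e_t,\bar e_r)\big\}.
\]
First I would condition on the full channel realization $\boldsymbol{\gamma}=(\gamma_{p,t},\gamma_{p,r},\gamma_{t,r},\gamma_{r,t})$ and on the four mutually exclusive first-phase outcomes $\{T_t{:}{\cal S},T_r{:}{\cal S}\}$, $\{T_t{:}{\cal S},T_r{:}{\cal F}\}$, $\{T_t{:}{\cal F},T_r{:}{\cal S}\}$, $\{T_t{:}{\cal F},T_r{:}{\cal F}\}$. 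Since the first phase is identical under CSA and OCSA, these conditional case probabilities coincide for the two protocols. Writing $a\dff 1-Q(\sqrt{2d_1\rho|\gamma_{p,t}|^2})$ and $b\dff 1-Q(\sqrt{2d_1\rho|\gamma_{p,r}|^2})$ for the first-phase success probabilities, and $u\dff 1-Q(\sqrt{2d\rho|\gamma_{p,t}|^2})$, $w\dff 1-Q(\sqrt{2d\rho|\gamma_{p,r}|^2})$ for the full-distance (non-cooperative) success probabilities, every conditional joint success becomes an explicit low-degree polynomial in $a,b,u,w\in[0,1]$, using the same conditional independence of the per-user, per-phase decoding events invoked in the proofs of Theorems~\ref{th:div_csa} and~\ref{th:div_ocsa}.

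For OCSA versus CSA I would argue term-by-term. When both users decode the first segment the conditional joint success is $1$ in both schemes; when both fail, CSA performs no relaying and yields $0$, whereas OCSA lets the primary relay and yields the nonnegative value $uw$. The decisive case is $\{T_t{:}{\cal S},T_r{:}{\cal F}\}$ (and its mirror). Here $T_r$ is the needy user, and because the primary is always an eligible relay (rule (c)/the backoff mechanism), $T_r$ is served by whichever of $T_t$ and $T_p$ carries the larger metric. Since $t_t>t_p\Leftrightarrow|\gamma_{t,r}|^2>|\gamma_{p,r}|^2$, the selected relay delivers second-phase gain $\max\{|\gamma_{t,r}|^2,|\gamma_{p,r}|^2\}$, so $T_r$'s conditional success is $1-Q(\sqrt{2d_1\rho|\gamma_{p,r}|^2+2d_2\rho\max\{|\gamma_{t,r}|^2,|\gamma_{p,r}|^2\}})$, at least the CSA value $1-Q(\sqrt{2d_1\rho|\gamma_{p,r}|^2+2d_2\rho|\gamma_{t,r}|^2})$. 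Thus each conditional term of $\pe^{\rm OC}(\bar e_t,\bar e_r)$ dominates the corresponding CSA term, and averaging over $\boldsymbol{\gamma}$ gives $\pe^{\rm OC}(\bar e_t,\bar e_r)\ge\pe^{\rm C}(\bar e_t,\bar e_r)$.

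For OCSA versus NC I would use that when both users fail the first phase the primary relays and each user sees the same channel in both phases, so its combined distance collapses to $d$ and the conditional joint success is exactly $uw$, matching the non-cooperative scheme. Hence the conditional OCSA joint success is
\[
J_{\rm OC}=ab+a(1-b)\,w^{+}+(1-a)b\,u^{+}+(1-a)(1-b)\,uw ,
\]
where $w^{+}\ge w$ and $u^{+}\ge u$ are the relay-assisted success probabilities of the needy users (again via the max-gain selection). Lower-bounding $w^{+}\ge w$, $u^{+}\ge u$ reduces the comparison with the conditional NC value $uw$ to the single nonnegativity identity
\[
ab+a(1-b)w+(1-a)b\,u+(1-a)(1-b)uw-uw=ab(1-u)(1-w)+aw(1-u)+bu(1-w)\ \ge\ 0,
\]
valid for all $a,b,u,w\in[0,1]$. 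Averaging over $\boldsymbol{\gamma}$ yields $\pe^{\rm OC}(\bar e_t,\bar e_r)\ge\pe^{\rm NC}(\bar e_t,\bar e_r)$, and combining the two bounds proves the lemma for every $\snr$.

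The routine part is the bookkeeping of the four conditional cases and the conditional independence of the decoding events, handled exactly as in the earlier proofs. The main obstacle is the OCSA-versus-NC step: unlike the CSA comparison it does \emph{not} hold term-by-term, and one must exploit both the max-gain relay selection (to secure $u^{+}\ge u$, $w^{+}\ge w$) and the exact reduction to $uw$ in the both-fail case before the nonnegativity identity above closes the argument. Because this entire comparison is carried out pointwise in $\boldsymbol{\gamma}$, the conclusion is valid for all $\snr$, not merely in the high-$\snr$ regime.
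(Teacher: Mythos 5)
Your proposal is correct and follows essentially the paper's own route: the same reduction via (\ref{eq:cap6}) to $\pe^{\rm OC}(\bar e_t,\bar e_r)\geq\max\{\pe^{\rm C}(\bar e_t,\bar e_r),\pe^{\rm NC}(\bar e_t,\bar e_r)\}$, the same expansion over the four first-phase outcomes with the $\max\{|\gamma_{p,r}|^2,|\gamma_{t,r}|^2\}$ gain from relay selection giving term-by-term domination over CSA (cf. (\ref{eq:lemma:ocsa1})--(\ref{eq:lemma:ocsa2})), and the same NC comparison via $u^{+}\geq u$, $w^{+}\geq w$ followed by the observation that the resulting convex combination dominates $uw$. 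Your explicit nonnegativity identity is simply a cleaned-up rendering of the paper's final chain of inequalities (which, as printed, even drops the factor $uw$ from the both-fail term in an intermediate step), so your write-up is if anything slightly more careful at that point.
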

\begin{proof} We equivalently show that $\pe^{\rm OC}(\bar e_t,\bar e_r)\geq\max\{\pe^{\rm C}(\bar e_t,\bar e_r),\pe^{\rm NC}(\bar e_t,\bar e_r)\}$. By expanding $\pe^{\rm OC}(\bar e_t,\bar e_r)$ over all different combinations of the statuses of the secondary users (success/failure) we get
\begin{align}
    \nonumber &\pe^{\rm OC}(\bar e_t,\bar e_r) \\
    \nonumber &= \underset{=1}{\underbrace{\pe^{\rm OC}(\bar e_t,\bar e_r\med T_t:{\cal S},\;T_r:{\cal S})}}\pe^{\rm }(T_t:{\cal S},\;T_r:{\cal S})\\
    \nonumber &\quad+\underset{=\pe^{\rm OC}(\bar e_r\med T_t:{\cal S},\;T_r:{\cal F})}{\underbrace{\pe^{\rm OC}(\bar e_t,\bar e_r\med T_t:{\cal S},\;T_r:{\cal F})}}\pe^{\rm}(T_t:{\cal S},\;T_r:{\cal F})\\
    \nonumber &\quad+\underset{=\pe^{\rm OC}(\bar e_t\med T_t:{\cal F},\;T_r:{\cal S})}{\underbrace{\pe^{\rm OC}(\bar e_t,\bar e_r\med T_t:{\cal F},\;T_r:{\cal S})}}\pe^{\rm}(T_t:{\cal F},\;T_r:{\cal S})\\
    \nonumber &\quad+\underset{=\pe^{\rm NC}(\bar e_t)\pe^{\rm NC}(\bar e_r)}{\underbrace{\pe^{\rm OC}(\bar e_t,\bar e_r\med T_t:{\cal F},\;T_r:{\cal F})}}\pe^{\rm}(T_t:{\cal F},\;T_r:{\cal F})\\
    \nonumber &=\pe^{\rm }(T_t:{\cal S},\;T_r:{\cal S})\\
    \nonumber &\quad+\pe^{\rm}(T_t:{\cal S},\;T_r:{\cal F})\\
    \nonumber &\quad\times\left[1-Q\left(\sqrt{2d_1\rho |\gamma_{p,r}|^2+2d_2\rho \max\{\gamma_{p,r}|^2,|\gamma_{t,r}|^2\}}\right)\right]\\
    \nonumber &\quad+\pe^{\rm}(T_t:{\cal F},\;T_r:{\cal S})\\
    \nonumber &\quad\times\left[1-Q\left(\sqrt{2d_1\rho |\gamma_{p,t}|^2+2d_2\rho \max\{\gamma_{p,t}|^2,|\gamma_{t,r}|^2\}}\right)\right]\\
    \nonumber&\quad+\pe^{\rm }(T_t:{\cal F},\;T_r:{\cal F})\\
    \label{eq:lemma:ocsa1}&\quad\times\left[1-Q\left(\sqrt{2d\rho |\gamma_{p,t}|^2}\right)\right]\left[1-Q\left(\sqrt{2d\rho |\gamma_{p,r}|^2}\right)\right].
\end{align}
\begin{figure}[t]
  \centering
  \includegraphics[width=3.7 in]{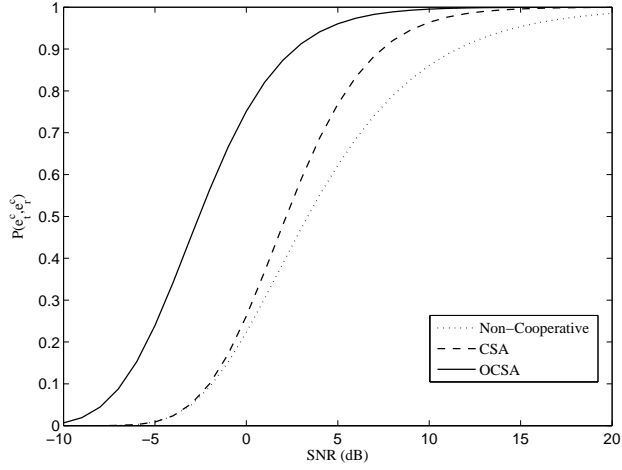}\\
  \caption{Comparing $\pe(\bar{e}_t,\bar{e}_r)$ for different schemes.} \label{fig:correlation}
\end{figure}
By following similar steps we can show that
\begin{align}
    \nonumber \pe^{\rm C}(\bar e_t,\bar e_r) &=\pe^{\rm }(T_t:{\cal S},\;T_r:{\cal S})\\
    \nonumber & +\pe^{\rm}(T_t:{\cal S},\;T_r:{\cal F})\\
    \nonumber & \times \left[1-Q\left(\sqrt{2d_1\rho |\gamma_{p,r}|^2+2d_2\rho |\gamma_{t,r}|^2}\right)\right]\\
    \nonumber
    & +\pe^{\rm}(T_t:{\cal F},\;T_r:{\cal S})\\
    \label{eq:lemma:ocsa2}&\times\left[1-Q\left(\sqrt{2d_1\rho |\gamma_{p,t}|^2+2d_2\rho |\gamma_{t,r}|^2}\right)\right],
\end{align}
Comparing (\ref{eq:lemma:ocsa1}) and (\ref{eq:lemma:ocsa2}) establishes that $\pe^{\rm OC}(\bar e_t,\bar e_r)\geq\pe^{\rm C}(\bar e_t,\bar e_r)$. Also from (\ref{eq:lemma:ocsa1}) we get
\begin{align}
    \nonumber \pe^{\rm OC}&(\bar e_t,\bar e_r) \\
    \nonumber& \geq \pe^{\rm }(T_t:{\cal S},\;T_r:{\cal S})+\pe^{\rm }(T_t:{\cal F},\;T_r:{\cal F})\\
    \nonumber &\quad +\pe^{\rm}(T_t:{\cal S},\;T_r:{\cal F})\left[1-Q\left(\sqrt{2d\rho |\gamma_{p,r}|^2}\right)\right]\\
    \nonumber &\quad+\pe^{\rm}(T_t:{\cal F},\;T_r:{\cal S})\left[1-Q\left(\sqrt{2d\rho |\gamma_{p,t}|^2}\right)\right]\\
    \nonumber &\geq \pe^{\rm }(T_t:{\cal S},\;T_r:{\cal S})+\pe^{\rm }(T_t:{\cal F},\;T_r:{\cal F})\\
    \nonumber &\quad+\pe^{\rm}(T_t:{\cal S},\;T_r:{\cal F})\\
    \nonumber &\quad\times \left[1-Q\left(\sqrt{2d\rho |\gamma_{p,r}|^2}\right)\right]\left[1-Q\left(\sqrt{2d\rho |\gamma_{p,t}|^2}\right)\right]\\
    \nonumber &\quad +\pe^{\rm}(T_t:{\cal F},\;T_r:{\cal S})\\
    \nonumber &\quad \times \left[1-Q\left(\sqrt{2d\rho |\gamma_{p,t}|^2}\right)\right]\left[1-Q\left(\sqrt{2d\rho |\gamma_{p,r}|^2}\right)\right]\\
    \nonumber &\geq \left[1-Q\left(\sqrt{2d\rho |\gamma_{p,t}|^2}\right)\right]\left[1-Q\left(\sqrt{2d\rho |\gamma_{p,r}|^2}\right)\right]\\
    \nonumber &=\pe^{\rm NC}(\bar e_t,\bar e_r)
\end{align}
which completes the proof.
\end{proof}

Figure~\ref{fig:correlation} illustrates numerical evaluations
comparing the term $\pe(\bar{e}_t,\bar{e}_r)$ for different schemes
given in (\ref{eq:lemma:csa})-(\ref{eq:lemma:ocsa2}). As expected, for very low $\snr$ regimes, CSA exhibits no gain over non-cooperative scheme, while OCSA outperforms both CSA and non-cooperative schemes in all $\snr$
regimes. As seen in the figure, the CSA protocol achieves
considerable gain in moderate $\snr$ regimes. For the numerical
evaluations we have assumed the same setup as in Fig. \ref{fig:diversity}.
\begin{theorem}
\label{th:3} For large enough $\snr$ values we have
\begin{eqnarray*}
&C^{\rm{U, C}}_{\rm erg}(\rho)>C^{\rm{U, NC}}_{\rm erg}(\rho),\;
\mbox{\rm and}\;\;\; C^{\rm{L, C}}_{\rm erg}(\rho)>C^{\rm{L,
NC}}_{\rm erg}(\rho),\\
&C^{\rm{U, C}}_{\epsilon}(\rho)>C^{\rm{U,
NC}}_{\epsilon}(\rho),\; \mbox{\rm and}\;\;\; C^{\rm{L,
C}}_{\epsilon}(\rho)>C^{\rm{L, NC}}_{\epsilon}(\rho).\;
\end{eqnarray*}
\end{theorem}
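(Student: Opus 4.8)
The plan is to reduce all four inequalities to two structural facts: the \emph{monotonicity} of the capacity bounds in the probability parameters $\pr(S_t=S_r=1)$ and $\pr(S_t=1)$, together with the ordering of these parameters under CSA that is already available. Writing $q\dff\pr(S_t=S_r=1)$ and $p\dff\pr(S_t=1)$, note that in the capacity expressions (\ref{eq:cap4})--(\ref{eq:cap5}) these are constants with respect to the data-bearing secondary channel $\gamma_{t,r}$, so the expectation and quantile in (\ref{eq:cap:erg})--(\ref{eq:cap:outage}) act only on $\gamma_{t,r}$. From Lemma~\ref{lemma:1} we have $q^{\rm C}\geq q^{\rm NC}$, while Theorem~\ref{th:div_csa} combined with (\ref{eq:cap1}) gives $p^{\rm C}>p^{\rm NC}$, since $\pe^{\rm C}(e_t)<\pe^{\rm NC}(e_t)$. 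A short diversity-order count also sharpens the first relation to a strict gap $q^{\rm C}-q^{\rm NC}$ of order $\rho^{-1}$ (because $\pe^{\rm C}(\bar e_t,\bar e_r)=1-\Theta(\rho^{-2})$ against $\pe^{\rm NC}(\bar e_t,\bar e_r)=1-\Theta(\rho^{-1})$), a fact I will reuse below.

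First I would dispose of the \emph{upper} bound. Differentiation shows $C^{\rm U}(P)=q\log(1+P/q)$ is nondecreasing in $q$: its $q$-derivative equals $\log(1+P/q)-\tfrac{P/q}{1+P/q}$, which is nonnegative by the elementary inequality $\log(1+x)\geq x/(1+x)$ and strictly positive for $P>0$. Hence $q^{\rm C}>q^{\rm NC}$ gives $C^{\rm U,C}(\rho\gamma_{t,r})>C^{\rm U,NC}(\rho\gamma_{t,r})$ for every realization with $\gamma_{t,r}>0$. Taking $\bbe_{\gamma_{t,r}}[\cdot]$ preserves the strict inequality (ergodic claim); for the outage claim, the pointwise domination makes $\pr(C^{\rm U,C}(\rho\gamma_{t,r})<c)\leq\pr(C^{\rm U,NC}(\rho\gamma_{t,r})<c)$ at every threshold $c$, so the $\epsilon$-outage quantile is strictly larger for the cooperative scheme.

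The \emph{lower} bound is where the real work lies, because $C^{\rm L}(P)=q\log(1+P/p)-1/T_c$ is increasing in $q$ but \emph{decreasing} in $p$, and CSA moves \emph{both} parameters upward, so the beneficial rise in $q$ is partly offset by the detrimental rise in $p$. I would control this tension by splitting the per-realization difference as
\begin{equation*}
C^{\rm L,C}-C^{\rm L,NC}=\underbrace{(q^{\rm C}-q^{\rm NC})\log\Big(1+\tfrac{P}{p^{\rm C}}\Big)}_{\text{gain}\ \geq 0}-\underbrace{q^{\rm NC}\Big[\log\Big(1+\tfrac{P}{p^{\rm NC}}\Big)-\log\Big(1+\tfrac{P}{p^{\rm C}}\Big)\Big]}_{\text{penalty}\ \geq 0},
\end{equation*}
and then comparing the two pieces by their high-$\snr$ growth. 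Since $q^{\rm C}-q^{\rm NC}$ is of order $\rho^{-1}$ and the logarithm is of order $\log\rho$ (as $P=\rho\gamma_{t,r}$ and $p^{\rm C}\to\pr(\theta_t=1)=\Theta(1)$), the gain scales like $\rho^{-1}\log\rho$; the penalty bracket is bounded for all $P$ by $\log(p^{\rm C}/p^{\rm NC})$, and because $p^{\rm C}/p^{\rm NC}=1+\Theta(\rho^{-1})$ this is only $\Theta(\rho^{-1})$. As $\rho^{-1}\log\rho$ eventually dominates $\rho^{-1}$, the gain wins for large $\snr$, yielding $C^{\rm L,C}(\rho\gamma_{t,r})>C^{\rm L,NC}(\rho\gamma_{t,r})$ on the typical event $\{\gamma_{t,r}=\Omega(1)\}$; integrating and taking the $\epsilon$-quantile then transfers this to the ergodic and outage statements exactly as for the upper bound.

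The main obstacle is precisely this last comparison. The delicate point is that the \emph{exponential-equality} calculus used earlier is too coarse here: in the $\doteq$ sense both the gain and the penalty are $\doteq\rho^{-1}$, so the decisive $\log\rho$ factor that separates them would be washed out. The argument must therefore retain the sub-polynomial (polylog) factors and proceed with genuine asymptotics rather than with $\doteq$. A secondary technical point is uniformity: the pointwise domination degrades when $\gamma_{t,r}=O(1/\rho)$, but this event has vanishing probability, and at any fixed outage level $\epsilon$ the relevant quantile of $\gamma_{t,r}$ stays bounded away from $0$, so the dominance holds where it is actually needed for both the expectation and the quantile arguments.
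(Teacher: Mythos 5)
Your proposal is correct, and its upper-bound half is exactly the paper's argument (monotonicity of $q\log(1+P/q)$ in $q$ plus Lemma~\ref{lemma:1}); for the lower bound, however, you take a genuinely different route. The paper never weighs a gain against a penalty: it extracts from the proof of Lemma~\ref{lemma:1}, eq.~(\ref{eq:lemma:csa}), the conditional inequality $\pe^{\rm C}(\bar e_r\med\bar e_t)\geq\pe^{\rm NC}(\bar e_r)$, which gives the ratio comparison $\pr^{\rm C}(S_t=S_r=1)/\pr^{\rm NC}(S_t=S_r=1)\geq\pr^{\rm C}(S_t=1)/\pr^{\rm NC}(S_t=1)$ in (\ref{eq:th3:proof1}); substituting $p^{\rm C}\leq(q^{\rm C}/q^{\rm NC})\,p^{\rm NC}$ into $C^{\rm L,C}(P)=q^{\rm C}\log(1+P/p^{\rm C})-1/T_c$ and reusing the monotonicity of $x\log(1+a/x)$ with $a=Pq^{\rm NC}/p^{\rm NC}$ yields $C^{\rm L,C}(P)>C^{\rm L,NC}(P)$ for \emph{every} $P>0$ once the large-$\snr$ orderings hold, so the ergodic and outage transfers are immediate. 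This sidesteps both difficulties you correctly flag: there is no $\Theta(\rho^{-1}\log\rho)$-versus-$\Theta(\rho^{-1})$ race, hence no need to sharpen Lemma~\ref{lemma:3}'s $\doteq$ calculus into genuine $\Theta$ estimates (which your route must re-derive, e.g., via the closed form of $\bbe[Q(\sqrt{2a\rho|\gamma|^2})]$ under Rayleigh fading), and no small-$\gamma_{t,r}$ bookkeeping in the ergodic average, where your argument still owes the routine step of bounding the negative contribution on $\{\gamma_{t,r}=O(\rho^{-1})\}$ by your uniform penalty bound $q^{\rm NC}\log(p^{\rm C}/p^{\rm NC})=\Theta(\rho^{-1})$ times a vanishing probability. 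What your route buys in exchange is quantitative insight the paper's algebra conceals: it isolates the mechanism (the correlation gain $q^{\rm C}-q^{\rm NC}=\Theta(\rho^{-1})$ amplified by the $\log\rho$ prefactor beats the marginal-detection penalty) and delivers the size of the capacity gap, $\Theta(\rho^{-1}\log\rho)$, which the paper's proof does not exhibit.
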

\begin{proof} See Appendix~\ref{app:theorem3}.
\end{proof}
\begin{theorem}
\label{th:4} For all $\snr$ regimes we have
\begin{align*}
  &C^{\rm U, OC}_{\rm
  erg}(\rho)>\max\bigg\{C^{\rm{U, C}}_{\rm erg}(\rho),C^{\rm{U, NC}}_{\rm erg}(\rho)\bigg\},\\
  & C^{\rm L, OC}_{\rm erg}(\rho)>\max\bigg\{C^{\rm{L, C}}_{\rm erg}(\rho),C^{\rm{L, NC}}_{\rm erg}(\rho)\bigg\},\\
  & C^{\rm U, OC}_{\epsilon}(\rho)>\max\bigg\{C^{\rm{U, C}}_{\epsilon}(\rho),C^{\rm{U, NC}}_{\epsilon}(\rho)\bigg\},\\
  \mbox{and}\quad&
  C^{\rm L, OC}_{\epsilon}(\rho)>\max\bigg\{C^{\rm{L, C}}_{\epsilon}(\rho),C^{\rm{L, NC}}_{\epsilon}(\rho)\bigg\}.
\end{align*}
\end{theorem}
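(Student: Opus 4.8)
The plan is to reduce Theorem~\ref{th:4} to the probability orderings already established and to the monotonicity of the capacity bounds in those probabilities, exactly paralleling Theorem~\ref{th:3} but feeding in Lemma~\ref{lemma:2} (which holds for \emph{all} $\snr$) in place of Lemma~\ref{lemma:1}. The bounds $C^{\rm U}$ and $C^{\rm L}$ in (\ref{eq:cap4})--(\ref{eq:cap5}) are deterministic functions of the two scalars $p\dff\pr(S_t=S_r=1)$ and $q\dff\pr(S_t=1)$ and of the instantaneous secondary power $P=\rho|\gamma_{t,r}|^2$; the protocol enters only through $p$ and $q$, while $P$ is common to all schemes for a fixed realization of $\gamma_{t,r}$. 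Hence it suffices to (i) establish the correct monotonicity of $C^{\rm U},C^{\rm L}$ in $p$ and $q$, (ii) combine it with Lemma~\ref{lemma:2} to obtain a \emph{per-realization} ordering of the capacity as a function of $\gamma_{t,r}$, and (iii) lift this pointwise ordering to the ergodic capacity by taking $\bbe_{\gamma_{t,r}}[\cdot]$ as in (\ref{eq:cap:erg}), and to the $\epsilon$-outage capacity of (\ref{eq:cap:outage}) by stochastic dominance: if $C^{\rm OC}(\rho|\gamma_{t,r}|^2)\geq C^{\rm C}(\rho|\gamma_{t,r}|^2)$ for every realization, then $\{C^{\rm OC}<c\}\subseteq\{C^{\rm C}<c\}$ for every threshold $c$, so the $\epsilon$-quantile of the OCSA capacity is no smaller.

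For the upper bound this program is immediate. Writing $f(p)=p\log(1+P/p)$ one checks $f'(p)=\log(1+P/p)-\frac{P/p}{1+P/p}>0$ from the elementary inequality $\log(1+x)>\frac{x}{1+x}$ for $x>0$, so $C^{\rm U}(P)$ is strictly increasing in $p$. Lemma~\ref{lemma:2} gives $\pr^{\rm OC}(S_t=S_r=1)\geq\max\{\pr^{\rm C},\pr^{\rm NC}\}(S_t=S_r=1)$, with the inequality strict on a positive-measure set of beacon-phase realizations (the OCSA expansion (\ref{eq:lemma:ocsa1}) carries $\max\{|\gamma_{p,r}|^2,|\gamma_{t,r}|^2\}$ where CSA in (\ref{eq:lemma:ocsa2}) carries only $|\gamma_{t,r}|^2$). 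Monotonicity in $p$ then yields $C^{\rm U,OC}(\rho|\gamma_{t,r}|^2)>\max\{C^{\rm U,C},C^{\rm U,NC}\}(\rho|\gamma_{t,r}|^2)$ for every $\gamma_{t,r}$, and step~(iii) delivers both the ergodic and the outage statements.

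The lower bound is the delicate case, and the obstacle is that $C^{\rm L}(P)=p\log(1+P/q)-\tfrac{1}{T_c}$ is increasing in $p$ but \emph{decreasing} in $q$, while cooperation \emph{raises both}: Lemma~\ref{lemma:2} raises $p$, yet the better individual detection of OCSA lowers $\pe(e_t)$ and hence, through (\ref{eq:cap1}), raises $q=\pr(\theta_t=1)(1-\pe(e_t))$. Plain $(p,q)$-monotonicity is therefore inconclusive. The way around this is to reparametrize by the correlation $r\dff p/q=\pr(S_r=1\med S_t=1)$, so that $C^{\rm L}(P)=qr\log(1+P/q)-\tfrac{1}{T_c}$. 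A one-line computation gives $\partial_r C^{\rm L}=q\log(1+P/q)\geq 0$ and $\partial_q C^{\rm L}=r\big[\log(1+P/q)-\tfrac{P/q}{1+P/q}\big]\geq 0$, so in the $(q,r)$ coordinates $C^{\rm L}$ is increasing in each argument. It then suffices to prove that cooperation increases both $q$ and $r$: the $q$-increase of OCSA over NC is exactly Theorem~\ref{th:div_ocsa} ($\pe^{\rm OC}(e_t)<\pe^{\rm NC}(e_t)$ for all $\snr$), while the $q$-increase over CSA requires a short verification from the OCSA selection rule, which hands the second phase to the node with the largest combined-link metric $t$ and so should not increase $\pe(e_t)$ relative to the fixed-relay CSA assignment.

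The crux is therefore the correlation ordering $r^{\rm OC}\geq\max\{r^{\rm C},r^{\rm NC}\}$, equivalently $\pr^{\rm OC}(S_t=S_r=1)/\pr^{\rm OC}(S_t=1)\geq\pr^{\rm C(NC)}(S_t=S_r=1)/\pr^{\rm C(NC)}(S_t=1)$. This does \emph{not} follow from Lemma~\ref{lemma:2} alone, since that lemma controls only the numerator $p$ and not the ratio; it is precisely the step that Theorem~\ref{th:3} could finesse by working at large $\snr$, where $\pe(e_t)\to 0$ forces $q^{\rm C},q^{\rm NC}\to\pr(\theta_t=1)$ and the $q$-dependence washes out. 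For all $\snr$ the plan is to re-run the four-way success/failure expansion used in the proof of Lemma~\ref{lemma:2}, but now for the \emph{ratio}: expand both $\pe(\bar e_t,\bar e_r)$ and $\pe(\bar e_t)$ over the patterns $\{T_t,T_r\}\times\{\mathcal S,\mathcal F\}$ and show term-by-term that the conditional joint-decoding probabilities gain more from the OCSA relay (which, carrying $\max\{|\gamma_{p,\cdot}|^2,|\gamma_{t,r}|^2\}$, is received favorably by \emph{both} users) than the marginal decoding probability does. I expect this ratio comparison, uniform over $\snr$, to be the main technical hurdle; once it is in hand, the $(q,r)$-monotonicity and the lifting step~(iii) close out all four inequalities of the theorem.
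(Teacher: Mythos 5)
Your proposal takes essentially the same route as the paper: the upper bounds follow from the monotonicity of $p\log(1+P/p)$ in $p$ combined with Lemma~\ref{lemma:2}, and the lower bounds hinge on the correlation orderings $r^{\rm OC}\geq\max\{r^{\rm C},r^{\rm NC}\}$ with $r=p/q$, which are exactly the paper's ratio inequalities (\ref{eq:th4:proof1})--(\ref{eq:th4:proof2}), your $(q,r)$ reparametrization being an equivalent repackaging of the substitution chain in Appendix~\ref{app:theorem3}. The step you flag as the main technical hurdle is left at the same level of detail in the paper itself, which only asserts that these inequalities follow ``by substituting and manipulating'' the expansions (\ref{eq:lemma:ocsa1})--(\ref{eq:lemma:ocsa2}), so your identification of the crux is exactly on target.
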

\begin{proof} See Appendix~\ref{app:theorem4}.
\end{proof}
\begin{figure}[t]
  \centering
  \includegraphics[width=3.7 in]{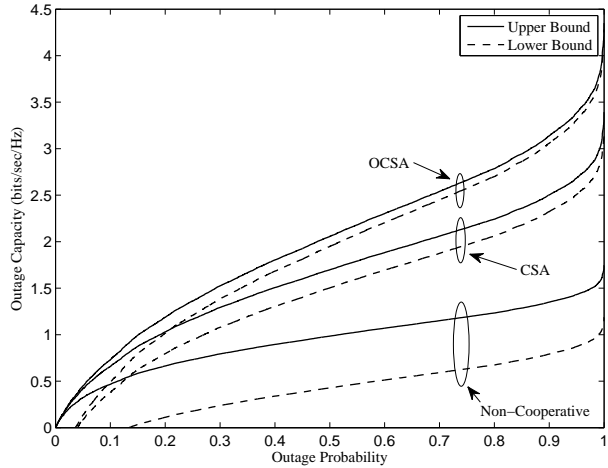}\\
  \caption{Lower and upper bounds on outage capacity for different schemes.}\label{fig:cap_out}
\end{figure}
Figure~\ref{fig:cap_out} demonstrates numerical evaluations of the
outage capacity achieved under different schemes. For these
evaluations we have used the capacity lower and upper bounds given
in (\ref{eq:cap:outage}) and have assumed $\alpha=\frac 1 2$. For
evaluating $\pr(S_t=S_r=1)$ given in (\ref{eq:cap6}) we have assumed
$\pr(\theta_t=\theta_r=1)=0.7$ in both cooperative and
non-cooperative schemes. We also have assumed $\snr$=10 dB, and for the
lower bounds on the outage capacity we have set $T_c=10$. It is observed that outage capacity bound increases as we allow higher outage probability. Also, since the gap between the lower and upper bounds for the cooperative schemes is small, the bounds seem to be tight.

\section{Discussions}
\subsection{Information Exchange}
\label{sec:feedback}

In this section we discuss how the primary and the secondary users acquire the channel state information that they need in the OCSA protocol. The users need to acquire channel gains, which can be facilitated via employing training-based channel estimation schemes and feedback communication. Such estimations and feedback can be carried out reliably via a very low-rate communication (by transmitting few pilot symbols). The  information about channel gains is used only once for initializing the backoff timers, which is during the transition of the channel access from the primary user to the secondary users and is not needed afterwards.

It is noteworthy that acquiring such channel state information is not guaranteed to be feasible in all networks, in which case the secondary users can deploy the CSA protocol that does not require any information exchange.

In the OCSA protocol we need the following channel state information.\\
1) {\it The secondary transmitter should know $|\gamma_{p,t}|$ and the secondary receiver should know $|\gamma_{p,r}|$:}\\
We assume that the primary transmitter is periodically transmitting training symbols to its designated receiver for training purposes. The secondary users can overhear these pilots due to the broadcast nature of the wireless channel and periodically acquire their desired channel states. Obtaining these channel gains is only possible when the primary users transmits training symbols to its designated receiver {\em and} these symbols are known to the secondary users {\em a priori} .\vspace{.1 in}\\
2) {\it The secondary users should know $|\gamma_{t,r}|=|\gamma_{r,t}|$}:\\
This channel gain can be acquired by having the secondary users exchange pilot symbols. This exchange should be performed while the primary user is still active. To avoid harming the communication of primary users, the pilots can be exchanged using ultra-wide band (UWB) communication which allows the secondary users to communicate the low-rate pilots well below the noise level of the primary user. As the secondary users do not have any prior information about the time of the availability of the channel, they should keep exchanging the pilots according to how static the channel $\gamma_{t,r}$ is. Since exchanging pilot sequences requires a very low-rate communication, the process of channel gain estimation can be performed reliably. This level of information exchange can be carried out in all networks. We also assume that the secondary users also exchange the estimates of $|\gamma_{p,t}$ and $|\gamma_{p,r}|$ that they have obtained in Step 1.\vspace{.1 in}\\
3) {\it The primary user should know $|\gamma_{p,t}|$ and $|\gamma_{p,r}|$:} We devise two time slots between the time that primary user finishes the transmission of the beacon, and the time that all users run their backoff timers. During these two time slots, the secondary users feed back the value of $|\gamma_{p,t}|^2+|\gamma_{p,r}|^2$ which they have already obtained during steps 1 and 2. Note that each secondary user will have feedback transmission only if it has successfully decoded the beacon, otherwise its dedicated feedback slot will be wasted. Also, the primary user will not receive any feedback only if both secondary users fail to decode the beacon, in which case its lack of knowledge about the value $|\gamma_{p,t}|^2+|\gamma_{p,r}|^2$ does not harm as the secondary users will not step in the competition phase and any random initializing of the primary users's backoff timer suffices to ensure that the primary user will be transmitting in the second phase.
\subsection{Imperfect Channel Estimation}
\label{sec:estimation}

The OCSA protocol requires the primary to know $|\gamma_{p,t}|^2+|\gamma_{p,r}|^2$, the secondary transmitter to know $|\gamma_{p,t}|^2+|\gamma_{t,r}|^2$, and the secondary receiver to know $|\gamma_{p,r}|^2+|\gamma_{r,t}|^2$. In the capacity analysis in Section \ref{sec:capacity} we have assumed that all users know their corresponding channel state information perfectly.

In practice, however, estimating and feeding back such channel gains is imperfect, which can potentially affect the channel capacity. As discussed in Section \ref{sec:feedback}, these channel gains are used only for the purpose of determining which user should be assigned as relay (to transmit the additional parity bits). Therefore, the effect brought about by imperfect channel estimate is the possibility of selecting a wrong relay. Note that not any imperfect estimation would necessarily lead to selecting a wrong relay, as relay selection is based on the relative order of $\{t_p,t_t,t_r\}$ and not their exact values. For instance, if $t_t=\max\{t_p,t_t,t_r\}$ and we denote the estimates of $\{t_p,t_t,t_r\}$ by $\{\tilde t_p,\tilde t_t,\tilde t_r\}$, there is the chance that $\tilde t_t=\max\{\tilde t_p,\tilde t_t,\tilde t_r\}$ too, in which case the estimation errors do not affect the performance of the protocol and the capacity of the secondary link.

In the OCSA protocol with imperfect channel estimates, for the channel realizations leading to the metrics $\{t_p,t_t,t_r\}$, we denote the probability of selecting a wrong relay by $P_{\rm wr}(t_p,t_t,t_r)$. We also denote the capacity of the secondary link when the right relay is selected by $C^{\rm OC}(P)$ (which is also the capacity with perfect channel estimates) and when a wrong relay is selected by $\widehat{C}^{\rm OC}(P)$, where  $P$ is the power of the received signal by the secondary receiver. Therefore, the capacity when channel estimates are imperfect is given by
\begin{equation}\label{eq:C:imperfect}
    \tilde C^{\rm OC}(P)\dff \bar P_{\rm wr}(t_p,t_t,t_r)\;C^{\rm OC}(P)+P_{\rm wr}(t_p,t_t,t_r)\widehat{C}^{\rm OC}(P).
\end{equation}
Note that $\tilde C^{\rm OC}(P)$ depends on $\{\gamma_{p,t},\gamma_{p,r},\gamma_{t,r}\}$ which is not explicitly expressed in the formulations for the ease of notations. Errors in estimating channel gains ultimately lead to errors in estimating $\{t_p,t_t,t_r\}$ which are used to set the initial values of the backoff timers. We denote such estimation errors by
\begin{equation*}
    w_i\dff \tilde t_i-t_i\quad \mbox{for}\;\;\;i=\{p,t,r\}
\end{equation*}
where $\tilde t_i$ is the estimate of $t_i$, and assume that $w_i\sim\mathcal{N}(0,\sigma^2)$. Therefore, analyzing the effect of imperfect channel estimation on the capacity of the cognitive link translates into analyzing how $\tilde C^{\rm OC}(P)$ and $\sigma^2$ are related, which is provided in the following theorem. We first find an upper bound on the relative capacity loss due to imperfect estimates and then show that one order of magnitude decrease in the estimation errors, translates into one order of magnitude decrease in the capacity loss, e.g., the capacity loss due to estimation noise with variance $\frac{\sigma^2}{100}$ is $\frac{1}{10}$ of the capacity loss due to the estimation noise variance with $\sigma^2$.
\begin{theorem}
\label{th:imperfect}
For i.i.d. channel realizations $\gamma_{p,t}, \gamma_{p,r}, \gamma_{t,r}$ and imperfect channel estimates with estimation noise variance $\sigma^2$,
\begin{enumerate}
  \item the relative capacity loss is bounded as
      \begin{align*}
        &\bbe_{\gamma_{t,r},\gamma_{p,t},\gamma_{p,r}}\left[\frac{C^{\rm OC}(\rho\gamma_{t,r})-\tilde C^{\rm OC}(\rho\gamma_{t,r})}{C^{\rm OC}(\rho\gamma_{t,r})}\right]\\
        &<\frac{1}{3}\;\bbe_{t_p,t_t,t_r}\left[Q\left(\frac{|t_p-t_t|}{\sqrt{2\sigma^2}}\right)+ Q\left(\frac{|t_p-t_r|}{\sqrt{2\sigma^2}}\right)\right],
      \end{align*}
  \item the capacity loss for decreasing values of $\sigma^2$ decays faster than $\sigma$, i.e.,
      \begin{equation*}
        \lim_{\sigma\rightarrow 0} \frac{\log\bbe_{\gamma_{t,r},\gamma_{p,t},\gamma_{p,r}}\left[\frac{C^{\rm OC}(\rho\gamma_{t,r})-\tilde C^{\rm OC}(\rho\gamma_{t,r})}{C^{\rm OC}(\rho\gamma_{t,r})}\right]}{\log\sigma}<\frac{1}{3}
      \end{equation*}
\end{enumerate}
\end{theorem}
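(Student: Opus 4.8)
The plan is to reduce the relative capacity loss to the wrong-relay probability $P_{\rm wr}$ and then argue that only those estimation errors which swap the \emph{category} of the selected relay (primary versus secondary) actually cost capacity. Starting from the definition (\ref{eq:C:imperfect}) and writing $\bar P_{\rm wr}=1-P_{\rm wr}$ yields the exact identity
\[
  C^{\rm OC}(P)-\tilde C^{\rm OC}(P)=P_{\rm wr}(t_p,t_t,t_r)\big(C^{\rm OC}(P)-\widehat C^{\rm OC}(P)\big),
\]
so the relative loss equals $P_{\rm wr}\,(C^{\rm OC}-\widehat C^{\rm OC})/C^{\rm OC}$. Since $0\le\widehat C^{\rm OC}\le C^{\rm OC}$, the second factor lies in $[0,1]$ and the whole quantity is bounded by $P_{\rm wr}$. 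The first key observation is that the two secondary users play interchangeable roles in the relaying phase, so selecting $T_r$ in place of $T_t$ (or vice versa) leaves the secondary-link capacity unchanged; hence $\widehat C^{\rm OC}\neq C^{\rm OC}$ only when the erroneous ordering moves the winner \emph{across} the primary/secondary boundary, i.e. when the true maximizer is a secondary but $\tilde t_p$ is estimated largest, or the true maximizer is the primary but some $\tilde t_i$, $i\in\{t,r\}$, is estimated largest.

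Next I would convert each boundary crossing into a Gaussian sign-flip. Because the $w_i\sim\mathcal N(0,\sigma^2)$ are i.i.d., we have $\tilde t_p-\tilde t_t=(t_p-t_t)+(w_p-w_t)$ with $w_p-w_t\sim\mathcal N(0,2\sigma^2)$, so the sign of $t_p-t_t$ is reversed with probability exactly $Q\!\big(|t_p-t_t|/\sqrt{2\sigma^2}\big)$, and likewise for the pair $(p,r)$. A category crossing forces at least one of these two reversals, so a union bound gives $P_{\rm wr}\le Q(|t_p-t_t|/\sqrt{2\sigma^2})+Q(|t_p-t_r|/\sqrt{2\sigma^2})$. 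The factor $\tfrac13$ in the statement then comes from the bookkeeping over the identity of the true relay: under i.i.d. gains the three metrics $t_p,t_t,t_r$ are exchangeable (with $a=|\gamma_{p,t}|^2$, $b=|\gamma_{p,r}|^2$, $c=|\gamma_{t,r}|^2$ one has $t_p=a+b$, $t_t=a+c$, $t_r=b+c$, whose joint law is permutation invariant), each is the true maximizer with probability $\tfrac13$, and only the crossing events weighted by this symmetry survive once one conditions on which user should have won. Taking $\bbe_{\gamma_{t,r},\gamma_{p,t},\gamma_{p,r}}[\cdot]$ of the resulting per-realization bound gives Part 1.

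For Part 2 I would pin down the $\sigma\to0$ order of $\bbe[Q(|t_p-t_t|/\sqrt{2\sigma^2})]$. Note $t_p-t_t=|\gamma_{p,r}|^2-|\gamma_{t,r}|^2$ and $t_p-t_r=|\gamma_{p,t}|^2-|\gamma_{t,r}|^2$ are differences of independent exponential variables, whose density is finite and continuous at the origin. Substituting $u=\delta/\sqrt{2\sigma^2}$ in $\bbe[Q(\Delta/\sqrt{2\sigma^2})]=\int_0^\infty Q(\delta/\sqrt{2\sigma^2})f_\Delta(\delta)\,d\delta$ turns it into $\sqrt{2}\,\sigma\int_0^\infty Q(u)f_\Delta(\sqrt2\,\sigma u)\,du$, and dominated convergence with $f_\Delta(\sqrt2\,\sigma u)\to f_\Delta(0)$ gives $\bbe[Q(\Delta/\sqrt{2\sigma^2})]\doteq\sigma$. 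Thus the relative loss vanishes at least linearly in $\sigma$, which reproduces the quantitative order-of-magnitude relation quoted before the theorem (scaling the variance $\sigma^2$ down by $100$ scales the loss down by $10$), and inserting this exponent into the log-ratio yields the decay-rate bound of Part 2.

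The hard part will be twofold. First, rigorously justifying that a secondary--secondary swap is capacity-neutral, so that only the $t_p$-versus-$t_t$ and $t_p$-versus-$t_r$ comparisons enter, together with the exchangeability argument that supplies the precise constant $\tfrac13$ rather than a cruder union-bound constant; this requires carefully tracking how the selected relay feeds into $\pr(S_t=S_r=1)$ in (\ref{eq:cap6}). Second, in Part 2, justifying the interchange of limit and integral by dominated convergence and verifying the regularity (finiteness and continuity at $0$) of the density of the exponential difference, since it is exactly this regularity that fixes the linear-in-$\sigma$ scaling and hence the claimed decay exponent.
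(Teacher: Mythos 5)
Your proposal is correct and follows the paper's architecture step for step: both reduce the relative loss to $\bbe[P_{\rm wr}]$ via (\ref{eq:C:imperfect}), both extract the factor $\frac{1}{3}$ from the exchangeability of $(t_p,t_t,t_r)$ under i.i.d.\ gains (the paper's expansion (\ref{eq:wrong_relay1})), both convert wrong selections into Gaussian sign flips $Q\big(|t_i-t_j|/\sqrt{2\sigma^2}\big)$ using $w_i-w_j\sim\mathcal{N}(0,2\sigma^2)$, and both show that $\bbe\big[Q\big(|t_p-t_t|/\sqrt{2\sigma^2}\big)\big]$ scales linearly in $\sigma$ for Part~2. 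Two sub-arguments genuinely differ, in both cases in your favor. First, to dispose of the $t_t$-versus-$t_r$ comparisons the paper merely asserts the inequalities (\ref{eq:wrong_relay4})--(\ref{eq:wrong_relay5}) as ``readily verified,'' whereas your capacity-neutrality observation --- a secondary--secondary swap can only be realized when both secondary users are eligible, i.e., both decoded the first segment, in which case both already hold the beacon and $\widehat{C}^{\rm OC}=C^{\rm OC}$ --- supplies a concrete justification for exactly this reduction, effectively filling in the step the paper leaves implicit; do still carry out the conditioning bookkeeping you flag, since each unconditional term $\bbe\big[Q\big(|t_p-t_t|/\sqrt{2\sigma^2}\big)\big]$ is assembled from the $\{t_p>t_t\}$ half (true maximizer $t_p$) and the $\{t_t>t_p\}$ half (true maximizer $t_t$), which is precisely how the paper combines the pieces in (\ref{eq:wrong_relay6}). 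Second, for Part~2 the paper uses $Q(x)\doteq e^{-x^2/2}$ together with a closed-form Gaussian-times-exponential integral, exploiting the fact that $\big||\gamma_{p,r}|^2-|\gamma_{t,r}|^2\big|$ is again exponentially distributed, while your substitution-plus-dominated-convergence route needs only finiteness and continuity of the difference density at the origin; yours is therefore more general (it would cover non-exponential fading laws) at the cost of the explicit constant. One caveat you share with, rather than introduce beyond, the paper: what both arguments actually establish is that the loss is $O(\sigma)$, which, since $\log\sigma<0$, gives $\liminf_{\sigma\to 0}\log\bbe[\cdot]/\log\sigma\geq 1$ --- the ``one order of magnitude'' claim in the text preceding the theorem; the printed bound ``$<\frac{1}{3}$'' (obtained in the paper by dividing by $3\log\sigma$ and keeping the inequality direction) does not follow from the linear-in-$\sigma$ estimate, and your closing sentence inherits that same sign-direction blemish rather than repairing or worsening it.
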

\begin{proof}
\begin{enumerate}
  \item We provide the analysis for the case that $\gamma_{p,t}, \gamma_{p,r}, \gamma_{t,r}$ are i.i.d. By following the same line of argument we can find an upper bound for the non-identical distributions as well. When the channel estimates are perfect, we denote the relay selected by the OCSA by $T_\mathcal{R}$. The probability of selecting a wrong relay by some simple manipulations can be expanded as follows.
      \begin{align}\label{eq:wrong_relay1}
        \nonumber &\bbe_{t_p,t_t,t_r}\bigg[ P_{\rm wr}(t_p,t_t,t_r)\bigg]= \\
        &  \hspace{-.1in} \sum_{i\in\{p,t,r\}}\hspace{-.05in}\frac{\bbe_{t_p,t_t,t_r}\bigg[\pe\Big(T_\mathcal{R}\neq T_i\med t_i=\max\{t_p,t_t,t_r\}\Big)\bigg]}{3}.
      \end{align}
      On the other hand we have
      \begin{align}
        \nonumber &\bbe_{t_p,t_t,t_r}\bigg[\pe\Big(T_{\cal R}=T_t\med t_p=\max\{t_p,t_t,t_r\} \Big)\bigg] \\
        \nonumber& = \bbe_{t_p,t_t,t_r}\bigg[\pe(T_t:{\cal S})\underset{1-\pe(T_r:{\cal S})}{\underbrace{\pe(T_r:{\cal F})}}\\
        \nonumber &\hspace{1 in}\times\pe\Big(\tilde t_t>\tilde t_p\med t_p > t_t,t_r\Big)\bigg]\\
        \nonumber& + \bbe_{t_p,t_t,t_r}\bigg[\pe(T_t:{\cal S})\pe(T_r:{\cal S})\\
        \nonumber &\quad\times \pe\Big(\tilde t_t>\tilde t_p\med t_p > t_t,t_r\Big)
        \underset{\leq 1}{\underbrace{\pe\Big(\tilde t_t>\tilde t_r\med t_p > t_t,t_r\Big)}}\bigg]\\
        \nonumber& \leq \bbe_{t_p,t_t,t_r}\bigg[\pe(T_t:{\cal S})\pe\Big(\tilde t_t>\tilde t_p\med t_p > t_t,t_r\Big)\bigg]\\
        \nonumber & \leq \bbe_{t_p,t_t,t_r}\bigg[\pe\Big(\tilde t_t>\tilde t_p\med t_p > t_t,t_r\Big)\bigg]\\
        \label{eq:wrong_relay2}&=\bbe_{t_p,t_t}\bigg[\pe\Big(\tilde t_t>\tilde t_p\med t_p > t_t\Big)\bigg],
      \end{align}
      and similarly
      \begin{align}
        \nonumber \bbe_{t_p,t_t,t_r}\bigg[\pe&\Big(T_{\cal R}=T_r\med t_p > t_t,t_r \Big)\bigg] \leq \\
        \label{eq:wrong_relay3}& \bbe_{t_p,t_r}\bigg[\pe\Big(\tilde t_r>\tilde t_p\med t_p > t_r\Big)\bigg].
      \end{align}
      Also it can be readily verified that
      \begin{align}
        \nonumber &\bbe_{t_p,t_t,t_r}\bigg[\pe\Big(T_{\cal R}=T_p\med t_t > t_p,t_r \Big)\\
        \nonumber & \hspace{.7 in} +\bbe_{t_p,t_t,t_r}\bigg[\pe\Big(T_{\cal R}=T_r\med t_t > t_p,t_r \Big)\bigg]\\
        \label{eq:wrong_relay4} &< \bbe_{t_p,t_t}\bigg[\pe\Big(\tilde t_p>\tilde t_t \med t_t > t_p\Big)\bigg],
      \end{align}
      and similarly
      \begin{align}
        \nonumber &\bbe_{t_p,t_t,t_r}\bigg[\pe\Big(T_{\cal R}=T_p\med t_r > t_p,t_t \Big)\\
        \nonumber & \hspace{.7 in} +\bbe_{t_p,t_t,t_r}\bigg[\pe\Big(T_{\cal R}=T_r\med t_r > t_p,t_t \Big)\bigg]\\
        \label{eq:wrong_relay5} &< \bbe_{t_p,t_t}\bigg[\pe\Big(\tilde t_p>\tilde t_r \med t_r > t_p\Big)\bigg],
      \end{align}
      By noting that $\tilde t_i=t_i+w_i$ where $w_i\sim\mathcal{N}(0,\sigma^2)$, and denoting the probability density functions of $t_i$ by $f(t_i)$, (\ref{eq:wrong_relay1})-(\ref{eq:wrong_relay5}) provide that
      \begin{align}\label{eq:wrong_relay6}
        \nonumber &\bbe_{t_p,t_t,t_r}\bigg[ P_{\rm wr}(t_p,t_t,t_r)\bigg]\\
        \nonumber & <\frac{1}{3}\;\bbe_{t_p,t_t}\bigg[P\Big(\tilde e_t-\tilde e_p> t_p-t_t\med t_p > t_t\Big)\bigg]\\
        \nonumber &\quad +\frac{1}{3}\;\bbe_{t_p,t_t}\bigg[P\Big(\tilde e_p-\tilde e_t> t_t-t_p\med t_t > t_p\Big)\bigg]\\
        \nonumber &\quad +\frac{1}{3}\;\bbe_{t_p,t_r}\bigg[P\Big(\tilde e_r-\tilde e_p> t_p-t_r\med t_p > t_r\Big)\bigg]\\
        \nonumber &\quad +\frac{1}{3}\;\bbe_{t_p,t_r}\bigg[P\Big(\tilde e_p-\tilde e_r> t_r-t_p\med t_r > t_p\Big)\bigg]\\
        \nonumber &= \frac{1}{3}\int_0^\infty\int_0^\infty Q\left(\frac{|t_p-t_t|}{\sqrt{2\sigma^2}}\right)f(t_p)\;dt_p\;f(t_t)\;dt_t\\
        \nonumber &\quad +\frac{1}{3}\int_0^\infty\int_0^\infty Q\left(\frac{|t_p-t_r|}{\sqrt{2\sigma^2}}\right)f(t_p)\;dt_p\;f(t_r)\;dt_r\\
        &=\frac{1}{3}\;\bbe_{t_p,t_t,t_r}\left[Q\left(\frac{|t_p-t_t|}{\sqrt{2\sigma^2}}\right)+ Q\left(\frac{|t_p-t_r|}{\sqrt{2\sigma^2}}\right)\right].
      \end{align}
      Now, from (\ref{eq:C:imperfect}) and (\ref{eq:wrong_relay6}) we get
      \begin{align}\label{eq:imperfect_loss}
        \nonumber &\bbe_{\gamma_{t,r},\gamma_{p,t},\gamma_{p,r}}\left[\frac{C^{\rm OC}(\rho\gamma_{t,r})-\tilde C^{\rm OC}(\rho\gamma_{t,r})}{C^{\rm OC}(\rho\gamma_{t,r})}\right]=\\
        \nonumber &<\bbe_{t_p,t_t,t_r}\bigg[ P_{\rm wr}(t_p,t_t,t_r)\bigg]\\
        &<\frac{1}{3}\;\bbe_{t_p,t_t,t_r}\left[Q\left(\frac{|t_p-t_t|}{\sqrt{2\sigma^2}}\right)+ Q\left(\frac{|t_p-t_r|}{\sqrt{2\sigma^2}}\right)\right],
      \end{align}
      which is the desired result.
  \item We start by showing that
  \begin{equation*}
    \lim_{\sigma\rightarrow 0}\frac{\log \bbe_{t_p,t_t}\left[Q\left(\frac{|t_p-t_t|}{\sqrt{2\sigma^2}}\right)\right]}{\log \sigma}=1.
  \end{equation*}
  By some simplifications we get
  \begin{align}\label{eq:wrong_relay7}
    \nonumber \bbe_{t_p,t_t}&\left[Q\left(\frac{|t_p-t_t|}{\sqrt{2\sigma^2}}\right)\right] =\\
    & \bbe_{\gamma_{p,r},\gamma_{t,r}}\left[Q\left(\frac{\Big||\gamma_{p,r}|^2-|\gamma_{t,r}|^2\Big|}{\sqrt{2\sigma^2}}\right)\right],
  \end{align}
  where we have assumed that $\{\gamma_{p,t},\gamma_{p,r},\gamma_{t,r}\}$ are i.i.d. distributed as $\mathcal{N}(0,\lambda)$ where we have defined $\lambda\dff\lambda_{p,t}=\lambda_{p,r}=\lambda_{t,r}$. Therefore, $|\gamma_{p,r}|^2$ and $|\gamma_{t,r}|^2$ are distributed exponentially with mean $2\lambda$. It can be readily shown that the random variable $X\dff\Big||\gamma_{p,r}|^2-|\gamma_{t,r}|^2\Big|$ is also distributed exponentially with mean $2\lambda$. Therefore, from (\ref{eq:wrong_relay7}) we get
  \begin{equation}\label{eq:wrong_relay8}
    \bbe_{t_p,t_t}\left[Q\left(\frac{|t_p-t_t|}{\sqrt{2\sigma^2}}\right)\right] =\bbe_{X}\left[Q\left(\frac{X}{\sqrt{2\sigma^2}}\right)\right].
  \end{equation}
  On the other hand, we know that $\forall x>0$
  \begin{equation*}
    \frac{1}{\sqrt{2\pi}x}\left(1-\frac{1}{x^2}\right)e^{-x^2/2}\leq Q(x)\leq \frac{1}{\sqrt{2\pi}x}e^{-x^2/2},
  \end{equation*}
  which provides
  \begin{align*}
    &\underset{=\;1}{\underbrace{\lim_{x\rightarrow\infty}\frac{\log\left(\frac{1}{\sqrt{2\pi}x}\left(1-\frac{1}{x^2}\right)e^{-x^2/2}\right)}{\log e^{-x^2/2}}}}\leq\\
    &\qquad \lim_{x\rightarrow\infty}\frac{\log Q(x)}{\log e^{-x^2/2}}\leq \underset{=\;1}{\underbrace{\lim_{x\rightarrow\infty}\frac{\log\left(\frac{1}{\sqrt{2\pi}x}e^{-x^2/2}\right)}{\log e^{-x^2/2}}}},
  \end{align*}
  or equivalently, for asymptotically large values of $x$,
  \begin{equation*}
    Q(x)\doteq e^{-x^2/2}.
  \end{equation*}
  For any given value of $X$, by setting $x=\frac{X}{\sqrt{2\sigma^2}}$, for asymptotically large values of $x$ (or small values of $\sigma$)
  \begin{equation*}
    Q\left(\frac{X}{\sqrt{2\sigma^2}}\right)\doteq e^{-\frac{X^2}{4\sigma^2}}.
  \end{equation*}
  Hence, for asymptotically small values of $\sigma$
  \begin{align*}
    \bbe_X\left[Q\left(\frac{X}{\sqrt{2\sigma^2}}\right)\right]&\doteq \bbe_X\left[e^{-\frac{X^2}{4\sigma^2}}\right]\\
    &=\frac{1}{2\lambda}\int_0^\infty e^{-(x^2/4\sigma^2+x/2\lambda)}dx\\
    &=\frac{1}{\sigma}\cdot\underset{\doteq 1}{\underbrace{\frac{\sqrt{\pi}e^{\sigma^2/4\lambda^2}}{2\lambda} \;Q\left(\frac{\sigma}{\lambda\sqrt{2}}\right)}}\doteq \frac{1}{\sigma},
  \end{align*}
  which in conjunction with (\ref{eq:wrong_relay8}) gives rise to
  \begin{equation}\label{eq:wrong_relay9}
    \bbe_{t_p,t_t}\left[Q\left(\frac{|t_p-t_t|}{\sqrt{2\sigma^2}}\right)\right]= \bbe_X\left[Q\left(\frac{X}{\sqrt{2\sigma^2}}\right)\right] \doteq \frac{1}{\sigma}.
  \end{equation}
  Similarly, we can show that
  \begin{equation}\label{eq:wrong_relay10}
    \bbe_{t_p,t_t}\left[Q\left(\frac{|t_p-t_r|}{\sqrt{2\sigma^2}}\right)\right] \doteq \frac{1}{\sigma}.
  \end{equation}
  Hence,
  \begin{equation}\label{eq:wrong_relay11}
    \bbe_{t_p,t_t,t_r}\left[Q\left(\frac{|t_p-t_t|}{\sqrt{2\sigma^2}}\right)+ Q\left(\frac{|t_p-t_r|}{\sqrt{2\sigma^2}}\right)\right]\doteq\frac{1}{\sigma}.
  \end{equation}
  (\ref{eq:imperfect_loss}) and (\ref{eq:wrong_relay11}) provide that
  \begin{align*}
    \lim_{\sigma\rightarrow 0}&\frac{\log\bbe_{\gamma_{t,r},\gamma_{p,t},\gamma_{p,r}}\left[\frac{C^{\rm OC}(\rho\gamma_{t,r})-\tilde C^{\rm OC}(\rho\gamma_{t,r})}{C^{\rm OC}(\rho\gamma_{t,r})}\right]}{\log \sigma} \\
    & <  \lim_{\sigma\rightarrow 0}\frac{\log\bbe_{t_p,t_t,t_r}\left[Q\left(\frac{|t_p-t_t|}{\sqrt{2\sigma^2}}\right)+ Q\left(\frac{|t_p-t_r|}{\sqrt{2\sigma^2}}\right)\right]}{3\log\sigma}\\
    & =  \frac{1}{3},
  \end{align*}
  which is the desired result.
\end{enumerate}
\end{proof}
\begin{figure}
  \centering
  \includegraphics[width=3.7 in]{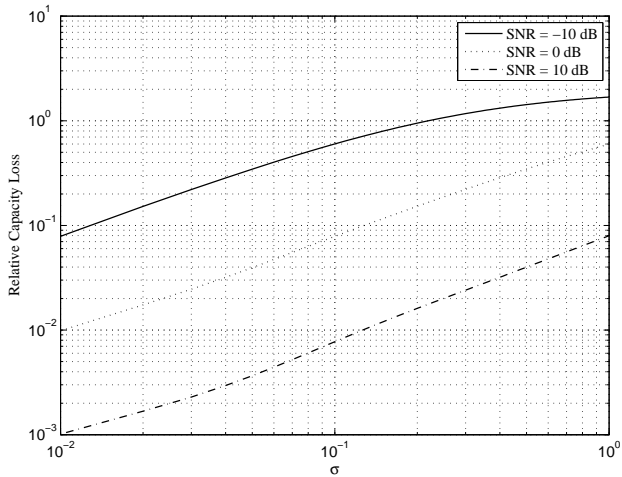}\\
  \caption{Relative capacity loss versus channel estimation noise accuracy.}\label{fig:imperfect}
\end{figure}

The numerical evaluation of $\bbe_{t_p,t_t,t_r}\left[Q\left(\frac{|t_p-t_t|}{\sqrt{2\sigma^2}}\right)+ Q\left(\frac{|t_p-t_r|}{\sqrt{2\sigma^2}}\right)\right]$ versus the variance of noise estimation ($\sigma^2)$ is depicted in Fig. \ref{fig:imperfect} for for three different $\snr$ values. We have considered the setup $\lambda_{p,t}=\lambda_{p,r}=\lambda_{t,r}=1$ where we can see that when the noise variance is $\sigma^2=0.01$ the capacity loss will be less than $\% 1$ at $\rho=0$ dB, which is a negligible loss. Also, it is observed that for any fixed value of $\sigma^2$, increasing the $\snr$ results in less capacity loss. This is justified by noting that for fixed noise level, more powerful signal are less prone to estimation errors.
\subsection{Throughput Analysis}
\label{sec:throughput}

Channel capacity, being an intrinsic characteristic of the wireless channel, is not influenced by how effectively the channel is utilized or by how much the MAC-level coordinations (e.g., backoff timers and information exchange in the OCSA protocol) cost. Therefore, in order to furnish a fair comparison between the OCSA protocol and the non-cooperative scheme and to incorporate the MAC-level costs of this protocol, we assess the achievable throughputs.

We have developed a {\em throughput} analysis, which rests on the basis of our {\em capacity} analysis, to account for two types of throughput losses incurred by the OCSA protocol. One loss is due to the time waste imposed by the backoff timers and the other loss is due to the required feedback from the secondary users to the primary user.

As as running the backoff timers and exchanging information take place only one time, if the secondary users access the channel for a period sufficiently larger than those of the backoff timers and the feedback communications, the throughput will be very close to the capacity and these throughput losses will have a negligible effect.

In order to formulate the throughput, which we denote by $R$, we define $T_{\rm CR}$ as the duration that the secondary link will use the channel and we denote the time dedicated to the feedback communication by $T_{\rm FB}$. Finally, we denote the initial value of the backoff timer of user $T_i$ by $T^i_{\rm BT}$ which is set as
\begin{equation*}
    T_{\rm BT}^i\dff\frac{\beta}{t_i}\quad\mbox{for}\;\;i\in\{p,t,r\},
\end{equation*}
where $\beta$ is a constant and its unit depends on the unit of $t_i$. Since $\{t_i\}$ are scalars, $\beta$ has the units of time.
Therefore, for any channel realization $\{\gamma_{p,t},\gamma_{p,r},\gamma_{t,r}\}$, when the relay is user $T_i$, the time delay due to backoff timers is $T_{\rm BT}^i$ and the throughput is found as
\begin{align}\label{eq:throughput}
    \nonumber R(\gamma_{p,t},\gamma_{p,r},\gamma_{t,r}) & = \frac{T_{\rm CR}}{T_{\rm CR}+T_{\rm FB}+T_{\rm BT}^i}\cdot\tilde C^{\rm OC}(\rho\gamma_{t,r})\\
    & =  \frac{T_{\rm CR}}{T_{\rm CR}+T_{\rm FB}+\frac{\beta}{t_i}}\cdot\tilde C^{\rm OC}(\rho\gamma_{t,r}),
\end{align}
where we have taken into account that there is a one-time feedback transmission and backoff timer activation. (\ref{eq:throughput}) suggests that when the duration that the secondary users access the channel is considerably longer than the time required for feedback, i.e., $T_{\rm CR}\gg T_{\rm FB}$, the throughput loss due to feedback will be negligible. However, the same argument does not apply to the effect of backoff timers as for very weak channels (small $t_i$), $\frac{\beta}{t_i}$ can become a non-negligible factor compared to $T_{\rm CR}$. In the following theorem, we assess the {\em average} throughput loss over the possible channel realizations.
\begin{theorem}\label{th:throughput}
The average throughput loss due to information exchange and backoff timers is upper bounded by
\begin{align*}
    \bbe_{\gamma_{t,r},\gamma_{p,t},\gamma_{p,r}}&\left[\frac{\tilde C^{\rm OC}(\rho\gamma_{t,r})-R(\gamma_{p,t},\gamma_{p,r},\gamma_{t,r})}{\tilde C^{\rm OC}(\rho\gamma_{t,r})}\right]\\
    &\leq \frac{w_1}{1+w_1}+(1+w_1)\left(e^{\frac{w_2}{1+w_1}}-1\right),
\end{align*}
where
\begin{equation*}
    w_1\dff\frac{T_{\rm FB}}{T_{\rm CR}}\quad\mbox{and}\quad w_2\dff\frac{\beta}{T_{\rm CR}\lambda_{p,t}}.
\end{equation*}
\end{theorem}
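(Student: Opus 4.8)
The plan is to exploit the fact that the capacity factor $\tilde C^{\rm OC}(\rho\gamma_{t,r})$ cancels out of the relative loss, so that the quantity to be bounded is purely the fraction of time consumed by feedback and by the backoff timer. From (\ref{eq:throughput}),
\begin{align*}
\frac{\tilde C^{\rm OC}(\rho\gamma_{t,r})-R(\gamma_{p,t},\gamma_{p,r},\gamma_{t,r})}{\tilde C^{\rm OC}(\rho\gamma_{t,r})}=1-\frac{1}{1+w_1+\frac{\beta}{T_{\rm CR}t_i}},
\end{align*}
where $T_i$ is the relay selected in the second phase and $t_i$ its metric. This expression is increasing in $\frac{\beta}{T_{\rm CR}t_i}$, hence decreasing in $t_i$, so a lower bound on $t_i$ yields an upper bound on the loss. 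The key structural fact I would establish is that in every branch of the OCSA selection rule the winning metric dominates the primary's metric, i.e. $t_i\ge t_p=|\gamma_{p,t}|^2+|\gamma_{p,r}|^2$: if a secondary user wins it does so precisely because its metric exceeds $\max\{t_p,t_t,t_r\}\ge t_p$, while if the primary wins (either because $t_p$ is largest or because both secondary users fail to decode) then $t_i=t_p$.

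With $c\dff 1+w_1$ and using $t_i\ge t_p$, the loss is bounded by $1-\frac{1}{c+\beta/(T_{\rm CR}t_p)}$, which is a function of $\gamma_{p,t},\gamma_{p,r}$ only, so the expectation over $\gamma_{t,r}$ is trivial. Under the i.i.d. assumption each of $|\gamma_{p,t}|^2$ and $|\gamma_{p,r}|^2$ is exponential with mean $\lambda\dff\lambda_{p,t}$, so $t_p$ is a sum of two i.i.d. exponentials; writing $t_p=\lambda s$ makes $s$ an Erlang-$2$ variable with density $se^{-s}$, and $\frac{\beta}{T_{\rm CR}t_p}=\frac{w_2}{s}$. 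The bound then reads
\begin{align*}
1-\frac{1}{c+w_2/s}=\frac{w_1 s+w_2}{cs+w_2}=\frac{w_1}{c}+\frac{w_2/c}{cs+w_2}.
\end{align*}
Taking the expectation over $s$ leaves the first term as $\frac{w_1}{1+w_1}$ and reduces the problem to bounding $\frac{w_2}{c}\,\bbe_s\!\left[\frac{1}{cs+w_2}\right]$.

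To finish I would use the elementary estimate $\frac{s}{cs+w_2}\le\frac{1}{c}$, i.e.
\begin{align*}
\bbe_s\!\left[\frac{1}{cs+w_2}\right]=\int_0^\infty\frac{s}{cs+w_2}\,e^{-s}\,ds\le\frac{1}{c},
\end{align*}
which makes the second term at most $\frac{w_2}{c^2}$; then $\frac{w_2}{c^2}\le w_2\le c\big(e^{w_2/c}-1\big)$ follows from $c\ge 1$ and $e^{x}\ge 1+x$, yielding exactly $\frac{w_1}{1+w_1}+(1+w_1)\big(e^{w_2/(1+w_1)}-1\big)$.

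The two places that require care—and the main obstacle—are, first, the case analysis certifying $t_i\ge t_p$, which is precisely where the structure of the opportunistic rule (the inclusion of the primary user in the competition) is used; and second, the realization that one must retain the full two-term metric $t_p$ rather than the single gain $|\gamma_{p,t}|^2$. A single exponential would make $\bbe[1/|\gamma_{p,t}|^2]$ diverge, and the loss would behave like $-x\log x$ near the origin, defeating any bound linear in $w_2$. The Erlang-$2$ structure (equivalently, the finiteness of $\bbe[1/t_p]$) is what tames the small-metric regime and renders the bound finite; the closing inequalities are deliberately loose, so the stated exponential expression is a clean sufficient bound rather than the tightest one obtainable.
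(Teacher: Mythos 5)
Your proof is correct, and it opens with the same two moves as the paper's own proof: the capacity factor cancels in the relative loss via (\ref{eq:throughput}), and the OCSA competition guarantees the winning metric satisfies $t_i\ge t_p$ because the primary user always participates. From there, however, you take a genuinely different --- and in fact sounder --- analytic route. The paper further relaxes $t_p=|\gamma_{p,t}|^2+|\gamma_{p,r}|^2$ down to the single gain $|\gamma_{p,t}|^2$, evaluates $\bbe_{h_{p,t}}\big[(1+w_1+w_2/|h_{p,t}|^2)^{-1}\big]$ exactly in terms of the exponential integral, and then invokes the bound $\int_a^\infty t^{-1}e^{-t}\,dt\le (1-e^{-a})/a$ with $a=\frac{w_2}{1+w_1}$ to reach the stated expression. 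That last inequality is false for small $a$ (the left side diverges like $\ln(1/a)$ while the right side tends to $1$), so the paper's chain breaks precisely in the practically relevant regime $w_2\ll 1$, where the single-exponential relaxation produces a loss of order $w_2\ln(1/w_2)$, exceeding the theorem's $O(w_2)$ bound --- exactly the $-x\log x$ obstruction you identified. By retaining both terms of $t_p$, so that $t_p/\lambda_{p,t}$ is Erlang-$2$ with density $se^{-s}$ and $\bbe[1/t_p]<\infty$, your partial-fraction identity $\frac{w_1 s+w_2}{cs+w_2}=\frac{w_1}{c}+\frac{w_2/c}{cs+w_2}$ (valid since $c=1+w_1$) and the pointwise estimate $\frac{s}{cs+w_2}\le\frac{1}{c}$ yield the cleaner intermediate bound $\frac{w_1}{1+w_1}+\frac{w_2}{(1+w_1)^2}$, from which the stated form follows by $c\ge 1$ and $e^x\ge 1+x$; every step checks. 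In short, your choice to keep the full metric is what makes the bound provable, and your argument repairs a defective step in the paper's proof rather than merely rederiving it.

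The one caveat: your Erlang-$2$ step needs $\lambda_{p,t}=\lambda_{p,r}$, which you flag via the i.i.d.\ assumption (consistent with Theorem \ref{th:imperfect} and the paper's numerical setup) but which the theorem statement does not announce. For unequal pathlosses the argument still goes through with $t_p$ hypoexponential --- the finiteness of $\bbe[1/t_p]$ is all you use --- at the cost of replacing $\lambda_{p,t}$ in the definition of $w_2$ by $\min\{\lambda_{p,t},\lambda_{p,r}\}$; it would be worth stating this explicitly.
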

\begin{proof}
First note that for the node $T_i$ selected as the relay, $t_i\geq t_p=|\gamma_{t,p}|^2+|\gamma_{t,r}|^2>|\gamma_{t,p}|^2$. Therefore, from (\ref{eq:throughput}) we get
\begin{equation*}
    \frac{R(\gamma_{p,t},\gamma_{p,r},\gamma_{t,r})}{\tilde C^{\rm OC}(\rho\gamma_{t,r})}  \geq \frac{1}{1+w_1+w_2\cdot\frac{1}{|h_{p,t}|^2}}.
\end{equation*}
Therefore,
\begin{align*}
    &\bbe_{\gamma_{t,r},\gamma_{p,t},\gamma_{p,r}}\left[\frac{R(\gamma_{p,t},\gamma_{p,r},\gamma_{t,r})}{\tilde C^{\rm OC}(\rho\gamma_{t,r})}\right] \\
    &\geq  \bbe_{\gamma_{p,t}}\left[\frac{T_{\rm CR}}{T_{\rm CR}+T_{\rm FB}+\frac{\beta}{{|\gamma_{p,t}|^2}}}\right]\\
    &=\bbe_{h_{p,t}}\left[\frac{1}{1+w_1+w_2\cdot\frac{1}{|h_{p,t}|^2}}\right]\\
    \\
    & =  \left[\frac{(1+w_1)e^{-t}+w_2e^{\frac{w_2}{1+w_1}}Ei\left(-\frac{w_2+(1+w_1)t}{1+w_1}\right)}{(1+w_1)^2}\right]_{t=0}^{t=\infty}\\
    & =  \frac{1}{1+w_1}-w_2e^{\frac{w_2}{1+w_1}}\int_{\frac{w_2}{1+w_1}}^\infty\frac{e^{-t}}{t}\;dt\\
    & \geq  \frac{1}{1+w_1}-w_2e^{\frac{w_2}{1+w_1}}\left(1-e^{-\frac{w_2}{1+w_1}}\right)\frac{1+w_1}{w_2}\\
    & =  \frac{1}{1+w_1}-(1+w_1)\left(e^{\frac{w_2}{1+w_1}}-1\right),
\end{align*}
which establishes the desires result.
\end{proof}
\begin{figure}
  \centering
  \includegraphics[width=3.7 in]{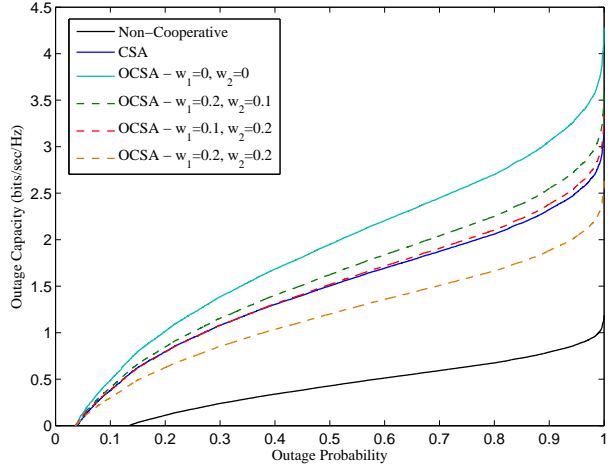}\\
  \caption{Throughput comparison in different schemes.}\label{fig:throughput}
\end{figure}
From the result above, it is consequently concluded that for the appropriate choice of $\beta$, i.e., by setting
\begin{equation*}
    \beta\ll T_{\rm CR}\lambda_{p,t}\quad \Rightarrow w_2\quad \ll 1
\end{equation*}
we can make the effect of the backoff timers very negligible which provides that $w_2\rightarrow 0$ and
\begin{equation*}
    \bbe_{\gamma_{t,r},\gamma_{p,t},\gamma_{p,r}}\left[\frac{\tilde C^{\rm OC}(\rho\gamma_{t,r})-R(\gamma_{p,t},\gamma_{p,r},\gamma_{t,r})}{\tilde C^{\rm OC}(\rho\gamma_{t,r})}\right]\leq \frac{w_1}{1+w_1}.
\end{equation*}
As for the effect of feedback, there will be a loss which is upper bounded by $\frac{T_{\rm CR}}{T_{\rm CR}+T_{\rm FB}}$. This loss will be also marginal if the period that the cognitive link is active is sufficiently larger than the small period of the feedback communication, i.e., $T_{\rm CR}\gg T_{\rm FB}$.

Figure \ref{fig:throughput} depicts the achievable throughput by the OCSA under different assumptions on feedback loads and backoff timer settings and compares them to the throughput achievable via the non-cooperative scheme and the CSA protocol. The solid curves represent the lower bounds on the capacity found in Section \ref{sec:capacity}. For the non-cooperative and the CSA schemes, due to having no MAC-layer bandwidth loss, the throughput is essentially equivalent to the capacity, whereas for the OCSA protocol the throughput becomes equal to the capacity when we set $w_1=w_2=0$. The dashed curves illustrate the throughput of the OCSA protocol for the different choices of $w_1,w_2$. We have considered a similar setup as in the numerical evaluations of Fig. \ref{fig:cap_out}. The results provide that for the choice of $w_1<0.15$ and $w_2<0.15$, the throughput of the OCSA scheme is higher than that of the CSA scheme and as $w_1$ and $w_2$ exceed these levels, the throughput of the OCSA protocol falls below that of the CSA protocol. Note that in practical scenarios, it is reasonable to assume that $w_1,w_2<0.15$ as $w_1=\frac{T_{\rm FB}}{T_{\rm CR}}$, which is the ratio of the length of the one-time feedback packet to the length of the information packets and is close t zero. Also, $w_2$ depends on the constant factor $\beta$ of the timers which can be chosen such that keeps the value of $w_2$ arbitrarily small.

\section{Multiuser Network}
\label{sec:multi}

\subsection{Multiuser CSA Protocol (MU-CSA)}
The cooperation protocols proposed in Section~\ref{sec:protocol} and
the subsequent analyses in Section~\ref{sec:diversity} consider a
cognitive network with a single pair of secondary transmitter and
receiver. In this section we provide a direction for generalizing
the CSA protocol to a cognitive network consisting of multiple secondary transmitter-receiver pairs. The objective is to provide {\em all} secondary users with the diversity gain $2M$ for decoding the beacon message.

We consider a multiuser network consisting of $M$ pairs of secondary transmitters and receivers denoted by $(T_t^1,T_r^1),\dots,(T_t^M,T_r^M)$. The physical channels between
the primary user and the $m$th secondary transmitter and receiver
are represented by $\gamma^m_{p,t}$, $\gamma^m_{p,r}$, respectively,
and $\gamma^m_{t,r}$ denotes the channel between $(T_t^m,T_r^m)$ for
$m=1,\dots, M$. As in Section~\ref{sec:model} for the channel
realizations we have
\begin{equation}
  \label{eq:multi1}
  \gamma^m_{i,j}=\sqrt{\lambda^m_{i,j}}\;h^m_{i,j},
\end{equation}
where $h^m_{i,j}$, the fading coefficients, are independent complex
Gaussian $\mathcal{CN}(0,1)$ random variables and $\lambda^m_{i,j}$
represent pathloss and shadowing effects.In the CSA protocol, a secondary user acts as relay based on its success in decoding the first segment of the beacon. The MU-CSA protocol consists of two steps:
\begin{enumerate}
  \item The primary user broadcasts the first segment of the beacon message during the initial $0<\alpha<1$ portion of the time slot as in the CSA and OCSA protocols and all the secondary transmitters and receivers listen to the message and at the end of the transmission try to decode the message.
  \item During the remaining $(1-\alpha)$ portion of the time slot, \emph{all} secondary users who have successfully decoded the first segment of the beacon construct the additional parity bits and broadcast them. Note that there is the possibility that multiple secondary users have been successful in the first phase. Since all these users broadcast {\em identical} information packets, their concurrent transmissions are not considered as collision and rather can be deemed as a distributed multiple-antenna transmission. For maintaining fairness in terms of the amount of resources consumed, the transmission power of individual secondary users acting as relay should not exceed $\frac{P_p}{2M}$ in order to keep the aggregate transmission power below $P_p$. There is the issue of synchronization between the successful relays, which they can ensure by using the beacon they have received as the synchronization reference.
\end{enumerate}

\subsection{Diversity Analysis}
Next we show that in a multiuser cognitive network with the MU-CSA
protocol, all secondary users in the enjoy the diversity gain of $2M$ in detecting the beacon message. We define the set $\mathcal{T}=\{\tilde T_1, \dots, \tilde T_{2M}\}= \{T_t^1,\dots,T_t^M\}\bigcup\{T_r^1,\dots,T_r^M\}$. Also we define ${\cal M}\dff\{1,\dots,2M\}$ and denote the channel between the secondary users $\tilde T_m$ and $\tilde T_n$ by $\tilde\gamma_{m,n}$ and the channel between $T_p$ and $\tilde T_m$ by $\tilde\gamma_{p,m}$. The probability of missing the beacon message by the secondary user $\tilde T_m$ is
\begin{align*}
    &\pe^{\rm C}(e_{\tilde T_m}) = \hspace{-.1 in}\sum_{\underset{A\neq \emptyset}{A\subseteq{\cal M}\backslash\{m\}}}\hspace{-.1 in}\pe^{\rm C}(e_{\tilde T_m}\med \forall i\in A, \tilde T_i:{\cal S}; \forall j\notin A, \tilde T_j:{\cal F})\\
    &\hspace{1.1 in}\times\pe(\forall i\in A, \tilde T_i:{\cal S}; \forall j\notin A, \tilde T_j:{\cal F})\\
    & \hspace{1.1 in}+ \underset{=1}{\underbrace{\pe^{\rm C}(e_{\tilde T_m}\med \forall i,\;\tilde T_i:{\cal F})}}\pe(\forall i,\;\tilde T_i:{\cal F})\\
    & = \sum_{\underset{A\neq \emptyset}{A\subseteq{\cal M}\backslash\{m\}}} \underset{\doteq \frac{1}{\rho^{|A|+1}}\;\;\mbox{\footnotesize (from Lemma \ref{lemma:3})}}{\underbrace{Q\left(\sqrt{2d_1\rho |\tilde\gamma_{p,m}|^2+2d_2\frac{\rho}{2M}\sum_{i\in A}|\tilde\gamma_{i,m}|^2}\right)}}\\
    &\hspace{1.1in}\times\underset{\doteq\frac{1}{\rho^{2M-|A|}}\;\;\mbox{\footnotesize (from Lemma \ref{lemma:3})}}{\underbrace{\prod_{j\notin A}Q\left(\sqrt{2d_1\rho |\tilde\gamma_{p,j}|^2}\right)}}\\
    &\hspace{1.1 in}\times \prod_{i\in A}\bigg[1-\underset{\frac{1}{\rho}\;\;\mbox{\footnotesize (from Lemma \ref{lemma:3})}}{\underbrace{Q\left(\sqrt{2d_1\rho |\tilde\gamma_{p,i}|^2}\right)}}\bigg]\\
    & \hspace{1.1 in}+ \prod_{i\in\{1,\dots,2M\}}\underset{\doteq\frac{1}{\rho}\;\;\mbox{\footnotesize (from Lemma \ref{lemma:3})}}{\underbrace{Q\left(\sqrt{2d_1\rho |\tilde\gamma_{p,j}|^2}\right)}}\\
    &\doteq \frac{1}{\rho^{2M+1}}+\frac{1}{\rho^{2M}}\doteq \frac{1}{\rho^{2M}}
\end{align*}
Therefore, all of the secondary users enjoy a diversity order $2M$ in decoding the beacon message.

The channel capacity can be obtained similarly as in
Section~\ref{sec:capacity} for each secondary pair when only that pair accesses the channel. For finding the channel capacity of a specific pair of secondary users we need to find 1) the probability of detecting the beacon by the secondary transmitter and 2) the probability that both of secondary transmitter and receiver are successful in decoding to beacon. The lower and upper bounds on the channel capacity are given in (\ref{eq:cap4}) and (\ref{eq:cap5}) and the capacity for the link between $(T_t^m,T_r^m)$ can be found by plugging the values of
\begin{equation*}
    \pr(S_{T_t^m}=1)=\pr(\theta_{T_t^m}=1)\pe^{\rm
    C}(\bar e_{T_t^m}),
\end{equation*}
and
\begin{equation*}
    \pr(S_{T_t^m}=S_{T_r^m}=1)=\pr(\theta_{T_t^m}=\theta_{T_r^m}=1)\pe^{\rm
    C}(\bar e_{T_t^m},\bar e_{T_r^m}).
\end{equation*}
\begin{figure}[t]
  \centering
  \includegraphics[width=3.7 in]{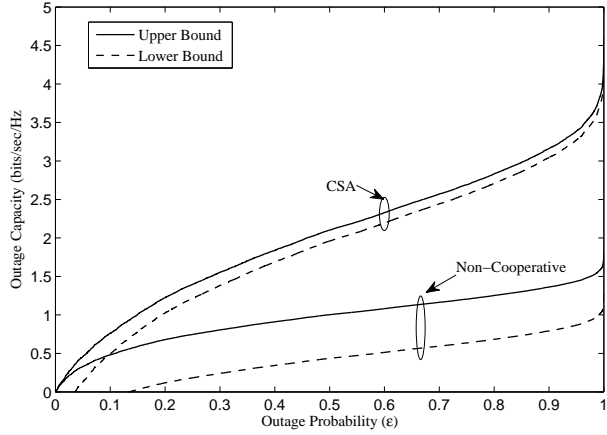}\\
  \caption{Outage capacity in a cognitive network with 5 pairs of users.}
  \label{fig:cap_multi_out}
\end{figure}
Numerical evaluations for the lower and upper bounds on the
secondary user capacity in a cognitive network consisting of 5
pairs of users are shown in Fig.~\ref{fig:cap_multi_out}, which
depicts the outage capacity. For this evaluation we have used the
same setup as that in Section~\ref{sec:evaluations}. By comparing
these results with those of a network with one pair of secondary users it is seen that more gain in terms of channel capacity is achievable for the multiuser network which is due to opportunistic selection of the cognitive pair. As the number of secondary users pairs increases, the quality of the best secondary user is expected to become better which justifies the additional gains observed in the multiuser cognitive networks.

\section{Conclusions}
\label{sec:conclusion}

In this paper we have proposed opportunistic cooperation protocols for supervised spectrum access in cognitive networks where the primary and secondary users are only allowed to have orthogonal access to the channel and cannot coexist concurrently. The primary user broadcasts a beacon message upon releasing the channel and a pair of secondary transmitter and receiver cooperatively detect the beacon message. The cooperation protocols are devised such that by a single-time relaying both secondary transmitter and receiver enjoy second-order diversity gains in detecting the beacon message. We have also quantified the effect of erroneous detection of vacant channels on the secondary channel capacity and show the advantage of the proposed protocols for achieving higher secondary channel capacity. One of the proposed protocols imposes information exchange and bandwidth costs, where we have assessed the effect of imperfect information exchange on the capacity and the loss due to the bandwidth cost on the achievable throughput. The cooperation models are also extended to multi-user cognitive networks as well. 
\appendix

\section{Proof of Lemma~\ref{lemma:3}}
\label{app:lemma3}

First we show that for the function $A(M,n)\dff\sum_{m=0}^M
{M\choose m}m^n(-1)^m$, $A(M,n)=0$ for $\forall M>n\geq 0$, and
$A(M,M)\neq 0$. We provide the proof by induction:\\
{\bf 1)} For $n=0$ and $\forall M>0$ we have $A(M,0)=\sum_{m=0}^M
{M\choose m}(-1)^m=(1+(-1))^M=0$.\\
{\bf 2)} Assumption: $\forall M>n>0$ we have
$A(M,n)=0$.\\
{\bf 3)} Claim: $\forall M>n+1$ we show that $A(M,n+1)=0$.
\begin{align*}
  A(M,n+1) &= \sum_{m=0}^M {M\choose m}m^{n+1}(-1)^m \\
  &=-M\sum_{m=0}^{M-1}{M-1\choose m}(m+1)^n(-1)^{m}\\
  &= -M\sum_{k=0}^n{n\choose k}\bigg[\sum_{m=0}^{M-1}{M-1\choose
  m}m^k(-1)^m\bigg]\\
  &= -M\sum_{k=0}^n{n\choose
  k}A(\underset{M-1>n\geq k}{\underbrace{M-1,k}})=0.
\end{align*}
Mow, we show that $A(M,M)\neq 0$. Again, by induction we have:\\
{\bf 1)} For $M=1$, $A(1,1)=-1$.\\
{\bf 2)} Assumption: $A(M,M)\neq0$.\\
{\bf 3)} Claim: $A(M+1,M+1)\neq0$.
\begin{align*}
  A(M+1,& M+1)\\
  &=\sum_{m=0}^{M+1} {M+1\choose m}m^{M+1}(-1)^m \\
  &=-(M+1)\sum_{m=0}^M{M\choose m}(m+1)^M(-1)^{m}\\
  &= -(M+1)\sum_{k=0}^M{M\choose
  k}\bigg[\sum_{m=0}^{M}{M\choose m}m^k(-1)^m\bigg]\\
  &= -(M+1)\bigg[\underset{\neq0}{\underbrace{A(M,M)}}+\sum_{k=0}^{M-1}{M\choose
  k}\underset{=0}{\underbrace{A(M,k)}}\bigg]\\
  & \neq0.
\end{align*}
Knowing the above result and considering the following Taylor series
expansion
\[\big(1+kx\big)^{-\frac{1}{2}}=\sum_{n=0}^{\infty}\frac{(2n)!}{2^{2n}(n!)^2}(-kx)^n,\]
and noting that
$\frac{1}{\sqrt{2\pi}}\int_0^{\infty}e^{-\frac{av^2}{2}}dv=\frac{1}{2}a^{-\frac{1}{2}}$
we get
\begin{align*}
  \int_0^{\infty}&\bigg(1-e^{-kv^2/\rho}\bigg)^M\frac{1}{\sqrt{2\pi}}e^{-v^2/2}\;dv \\ &=
  \sum_{m=0}^M {M\choose m}(-1)^m\frac{1}{\sqrt{2\pi}}\int_0^{\infty}e^{-\frac{v^2}{2}(1+2km/\rho)}dv \\
  &= \frac{1}{2}\sum_{m=0}^M {M\choose
  m}(-1)^m\bigg(1+\frac{2km}{\rho}\bigg)^{-1/2}\\
  &=\frac{1}{2}\sum_{m=0}^M {M\choose
  m}(-1)^m\sum_{n=0}^{\infty}\frac{(2n)!}{2^{2n}(n!)^2}(-1)^n\bigg(\frac{2km}{\rho}\bigg)^n\\
  &= \frac{1}{2}\sum_{n=0}^{\infty}\rho^{-n}\frac{(2n)!}{2^{n}(n!)^2}(-k)^n\sum_{m=0}^M {M\choose
  m}m^n(-1)^m\\
  &=
  \frac{1}{2}\sum_{n=0}^{\infty}\rho^{-n}\frac{(2n)!}{2^{n}(n!)^2}(-k)^nA(M,n)\\
  &=
  \frac{1}{2}\sum_{n=M}^{\infty}\rho^{-n}\frac{(2n)!}{2^{n}(n!)^2}(-k)^nA(M,n)\doteq\rho^{-M},
\end{align*}
therefore
\begin{align*}
   \rho^{-M}&\doteq\int_0^{\infty}\prod_{i=1}^M\bigg(1-e^{-\min_i k_iv^2/\rho}\bigg)\frac{1}{\sqrt{2\pi}}
   \;e^{-v^2/2}\;dv\;\\
   &\leq \int_0^{\infty}\prod_{i=1}^M\bigg(1-e^{-k_iv^2/\rho}\bigg)\frac{1}{\sqrt{2\pi}}\;e^{-v^2/2}\;dv\;\\
   &\leq \int_0^{\infty}\prod_{i=1}^M\bigg(1-e^{-\min_i
   k_iv^2/\rho}\bigg)\frac{1}{\sqrt{2\pi}}\;e^{-v^2/2}\;dv\\
   &\;\doteq
   \rho^{-M}
\end{align*}
which concludes the lemma.

\section{Proof of Theorem~\ref{th:3}}
\label{app:theorem3}

We show that for any $P>0$, $C^{\rm U, C}(P)>C^{\rm U, NC}(P)$ and
$C^{\rm L, C}(P)>C^{\rm L, NC}(P)$ which consequently proves the
theorem. First, we remark that $p\log(1+\frac{a}{p})$ is non-decreasing in $p$ since
\begin{eqnarray*}
  \frac{\partial }{\partial p}\;C^{\rm U}(a)
  =\frac{\partial}{\partial p}\; p\log\bigg(1+\frac{a}{p}\bigg)\geq
  0.
\end{eqnarray*}
The inequality above holds because for the function $f(u)\dff\log
u-(1-\frac{1}{u})$, $f(1)=0$ and for $u\geq 1$, $\frac{\partial
f(u)}{\partial u}=\frac{1}{u}-\frac{1}{u^2}\geq 0$ and therefore
$f(u)\geq 0$ for $u\geq 1$. Therefore, $C^{\rm U}(P)$ as given in (\ref{eq:cap4}) is non-decreasing
in the probability $\pr(S_t=S_r=1)$. By using Lemma~\ref{lemma:1} it
is concluded that for sufficiently large $\snr$ $C^{\rm U,
C}(P)>C^{\rm U, NC}(P)$, which in turn, establishes the result for
the upper bound. For the lower bound, as shown in (\ref{eq:lemma:csa}), for sufficiently large $\snr$, $\pe^{\rm C}(\bar{e}_r\med\bar{e}_t)\geq\pe^{\rm NC}(\bar{e}_r)$ as shown
\begin{equation*}
    \frac{\pe^{\rm C}(\bar{e}_t,\bar{e}_r)}{\pe^{\rm NC}(\bar{e}_t)\pe^{\rm NC}(\bar{e}_r)}>\frac{\pe^{\rm C}(\bar{e}_t)}{\pe^{\rm NC}(\bar{e}_t)},
\end{equation*}
or equivalently
\begin{equation}
   \label{eq:th3:proof1}\frac{\pr^{\rm C}(S_t=S_r=1)}{\pr^{\rm NC}(S_t=S_r=1)}>\frac{\pr^{\rm C}(S_t=1)}{\pr^{\rm
   NC}(S_t=1)}.
\end{equation}
Therefore, for the lower bound we get that for sufficiently large $\snr$
\begin{align}
  \nonumber
  C^{\rm L,C}(P)&=\pr^{\rm
  C}(S_t=S_r=1)\log\bigg(1+\frac{P}{\pr^{\rm
  C}(S_t=1)}\bigg)-\frac{1}{T_c}\\
  \nonumber&>\pr^{\rm
  C}(S_t=S_r=1)\\
  \label{eq:th3:proof3} &\quad\times\log\bigg(1+\frac{P}{\frac{\pr^{\rm
  C}(S_t=S_r=1)}{\pr^{\rm
  NC}(S_t=S_r=1)}\pr^{\rm
  NC}(S_t=1)}\bigg)-\frac{1}{T_c}\\
  \nonumber &> \pr^{\rm NC}(S_t=S_r=1)\\
  \nonumber&\quad\times\log\bigg(1+\frac{P}{\pr^{\rm
  NC}(S_t=1)}\bigg)-\frac{1}{T_c}\\
  \label{eq:th3:proof4}&= C^{\rm L, NC}(P),
\end{align}
where (\ref{eq:th3:proof3}) follows from (\ref{eq:th3:proof1}) and the transition from where (\ref{eq:th3:proof3}) to (\ref{eq:th3:proof4}) follows from the fact that $p\log(1+\frac{a}{p})$ is increasing in $p$.
\section{Proof of Theorem~\ref{th:4}}
\label{app:theorem4}

As shown in Appendix~\ref{app:theorem3}, $C^{\rm U}(P)$ is non-decreasing in $\pr(S_t=S_r=1)$. By using Lemma~\ref{lemma:2} it is readily
verified that for the function $C^{\rm U}(P)$
\begin{equation*}
  C^{\rm U,OC}(P)>\max\bigg\{C^{\rm{U,C}}(P),C^{\rm{U,NC}}(P)\bigg\}.
\end{equation*}
For the lower bound, we deploy the same approach as in the proof of
Theorem~\ref{th:3} and first show that
\begin{eqnarray}
  \label{eq:th4:proof1}\frac{\pr^{\rm OC}(S_t=S_r=1)}{\pr^{\rm C}(S_t=S_r=1)}&\geq&\frac{\pr^{\rm OC}(S_t=1)}{\pr^{\rm
  C}(S_t=1)},\\
  \mbox{and}\;\;\;\label{eq:th4:proof2}\frac{\pr^{\rm OC}(S_t=S_r=1)}{\pr^{\rm NC}(S_t=S_r=1)}&\geq&
  \frac{\pr^{\rm OC}(S_t=1)}{\pr^{\rm
  NC}(S_t=1)}.
\end{eqnarray}
The above inequalities can be proven by substituting and manipulating the expansions of $\pr^{\rm C}(e_t)$ and $\pr^{\rm OC}(e_t)$ given in Theorems \ref{th:div_csa}, \ref{th:div_ocsa} and the expansions of $\pr^{\rm C}(\bar e_t,\bar e_r)$ and $\pr^{\rm OC}(\bar e_t,\bar e_r)$  (\ref{eq:lemma:ocsa1}) and (\ref{eq:lemma:ocsa2}). After establishing the above inequalities, the last step is similar to that of Appendix \ref{app:theorem3}.

\renewcommand\url{\begingroup\urlstyle{rm}\Url}

{\small \bibliographystyle{IEEEtran}
\bibliography{IEEEabrv,DSA_Coop}}

\end{document}